\DeclareSymbolFont{AMSb}{U}{msb}{m}{n}
\DeclareSymbolFontAlphabet{\mathbb}{AMSb}
\DeclareSymbolFontAlphabet{\mathbbl}{bbold}
\newcommand\cA{\mathcal{A}}
\newcommand\cB{\mathcal{B}}
\newcommand\cC{\mathcal{C}}
\newcommand\cD{\mathcal{D}}
\newcommand\cE{\mathcal{E}}
\newcommand\cF{\mathcal{F}}
\newcommand\cG{\mathcal{G}}
\newcommand\cH{\mathcal{H}}
\newcommand\cI{\mathcal{I}}
\newcommand\cJ{\mathcal{J}}
\newcommand\cK{\mathcal{K}}
\newcommand\cL{\mathcal{L}}
\newcommand\cM{\mathcal{M}}
\newcommand\cN{\mathcal{N}}
\newcommand\cO{\mathcal{O}}
\newcommand\cP{\mathcal{P}}
\newcommand\cQ{\mathcal{Q}}
\newcommand\cR{\mathcal{R}}
\newcommand\cS{\mathcal{S}}
\newcommand\cT{\mathcal{T}}
\newcommand\cU{\mathcal{U}}
\newcommand\cV{\mathcal{V}}
\newcommand\cW{\mathcal{W}}
\newcommand\cX{\mathcal{X}}
\newcommand\cY{\mathcal{Y}}
\newcommand\cZ{\mathcal{Z}}
\newcommand\bA{\mathbb{A}}
\newcommand\bB{\mathbb{B}}
\newcommand\bC{\mathbb{C}}
\newcommand\bD{\mathbb{D}}
\newcommand\bE{\mathbb{E}}
\newcommand\bF{\mathbb{F}}
\newcommand\bG{\mathbb{G}}
\newcommand\bH{\mathbb{H}}
\newcommand\bI{\mathbb{I}}
\newcommand\bJ{\mathbb{J}}
\newcommand\bK{\mathbb{K}}
\newcommand\bL{\mathbb{L}}
\newcommand\bM{\mathbb{M}}
\newcommand\bN{\mathbb{N}}
\newcommand\bO{\mathbb{O}}
\newcommand\bP{\mathbb{P}}
\newcommand\bQ{\mathbb{Q}}
\newcommand\bR{\mathbb{R}}
\newcommand\bS{\mathbb{S}}
\newcommand\bT{\mathbb{T}}
\newcommand\bU{\mathbb{U}}
\newcommand\bV{\mathbb{V}}
\newcommand\bW{\mathbb{W}}
\newcommand\bX{\mathbb{X}}
\newcommand\bY{\mathbb{Y}}
\newcommand\bZ{\mathbb{Z}}
\newcommand{\id}{\mathrm{id}}
\newcommand{\Set}{{\sf Set}}
\newcommand{\basety}{\mathbbl{o}}
\newcommand\lampa{\lambda\wp}
\newsavebox{\foobox}
\newcommand\fancya{{\text{\normalfont a}}}
\newcommand\lama{\lambda\fancya}
\newcommand\lin{\multimap}
\newcommand\tensor{\otimes}
\newcommand\interp[1]{{\llbracket #1 \rrbracket}}
\newcommand\Hom[3]{\left[#2,#3\right]_{#1}}
\newcommand\unit{\mathbf{I}}
\newcommand\ev{\mathrm{ev}}
\newcommand\op{\mathrm{op}}
\newcommand\lam{\lambda}
\newcommand\Str{\mathrm{Str}}
\newcommand\NF{\mathrm{NF}}
\newcommand\lamequiv{{\beta\eta}}
\newcommand\longto\longrightarrow
\newcommand\Int{\mathsf{Int}}
\newcommand{\ttreverse}{\mathtt{reverse}}
\newcommand\vareps\varepsilon
\newcommand{\catAutomata}[1]{\catname{Shape}_{#1}}
\newcommand{\catname}[1]{\textsf{#1}}
\newcommand{\catTransition}{\catname{TransDiag}}
\newcommand{\emptyString}{\epsilon}
\newcommand{\stringStart}{\triangleright}
\newcommand{\stringEnd}{\triangleleft}
\newcommand{\str}[1]{{#1}^{*}}
\newcommand{\forward}[1]{#1^\rightarrow}
\newcommand{\backward}[1]{#1^\leftarrow}
\newcommand{\ioSeparator}{/}
\tikzset{
  plus/.style={circle,draw,fill=orange!80, node contents={$+$}},
  minus/.style={circle, draw, fill=cyan!80, node contents={$-$}},
  box/.style={fill=lightgray!50, node contents={}},
}
\newcommand{\sIn}{\mathrm{in}}
\newcommand{\sOut}{\mathrm{out}}
\newcommand{\sState}{\mathrm{states}}
\newcommand{\lolly}{\multimap}
\newcommand{\lI}{\unit}
\newcommand{\nullEnv}{\cdot}
\newcommand{\hasType}[2]{#1 : #2}
\newcommand{\judgement}[4]{#1; #2 \vdash \hasType{#3}{#4}}
\newcommand{\eqdef}{\mathrel{:=}}
\newcommand\CMRupdate[2]{\mathrm{RegUp}(#1,#2)}
\newcommand\registerContent{\mathrm{RegContent}}
\newcommand\Church[1]{\mathrm{Church}(#1)}
\newcommand{\dual}[1]{#1^{*}}
\newcommand{\leftdual}[1]{\dual{{}}\!#1}
\newcommand{\FinSet}{\mathsf{FinSet}}
\newcommand{\FinRel}{\mathsf{FinRel}}
\newcommand{\ELEMENTARY}{\textsc{Elementary}}
\begin{document}
\begin{frontmatter}
  \title{Implicit Automata in $\lambda$-calculi III:\\[1ex] Affine Planar String-to-string Functions\thanksref{ALL}}
 \thanks[ALL]{We thank Aurore Alcolei, Lê Thành D\~ung Nguy{\~{ê}}n and Arno Pauly for discussions about this work.\\
  The authors acknowledge support by Swansea University and grant to Swansea University a non‐exclusive, irrevocable, sub‐licensable, worldwide license to make the accepted manuscript available on its institutional repository. 
  }   
  \author{Cécilia Pradic\thanksref{a}\thanksref{myemail}}
   \author{Ian Price\thanksref{a}\thanksref{coemail}}
   \address[a]{Department of Computer Science\\ Swansea University\\
    Wales}  							
   \thanks[myemail]{Email: \href{mailto:c.pradic@swansea.ac.uk} {\texttt{\normalshape
        c.pradic@swansea.ac.uk}}}
  \thanks[coemail]{Email:  \href{mailto:2274761@swansea.ac.uk} {\texttt{\normalshape
        2274761@swansea.ac.uk}}}
\begin{abstract}
We prove a characterization of first-order string-to-string
transduction via $\lambda$-terms typed in non-commutative affine logic
that compute with Church encoding, extending the analogous known
characterization of star-free languages. We show that every
first-order transduction can be computed by a $\lambda$-term using a known
Krohn-Rhodes-style decomposition lemma. The converse direction is
given by compiling $\lambda$-terms into two-way reversible planar
transducers. The soundness of this translation involves showing that
the transition functions of those transducers live in a monoidal
closed category of diagrams in which we can interpret purely affine
$\lambda$-terms. One challenge is that the unit of the tensor of the
category in question is not a terminal object. As a result, our interpretation
does not identify $\beta$-equivalent terms, but it does turn $\beta$-reductions
into inequalities in a poset-enrichment of the category of diagrams.
\end{abstract}
\begin{keyword}
non-commutative linear logic, transducers, $\lambda$-calculus, automata theory,
Church encodings
\end{keyword}

\end{frontmatter}

\section{Introduction}

The first author and Nguy{\~{ê}}n initiated a series of work that
compares the expressiveness of \emph{simply-typed affine $\lambda$-calculi} (in the sense of linear logic) and 
\emph{finite-state machine} from automata theory in~\cite{aperiodic}.
This endeavour is very much in the spirit of \emph{implicit computational complexity},
a field where one attempts to capture complexity-theoretic classes
of functions (rather than automata-theoretic) via various typed programming
languages, hence our borrowing of the term ``implicit''.

The starting point was to refine Hillebrand and Kanellakis' theorem~\cite[Theorem~3.4]{HillebrandKanellakis}
that states that the simply-typed $\lambda$-calculus captures regular languages
when computing over Church encodings.
Then, it was shown that one can also characterize \emph{star-free languages} via
the non-commutative affine $\lambda$-calculus ($\lampa$)~\cite{aperiodic}.
$\lampa$ features a function type that constrains arguments to be used at most
once and ``in order'', which restrains the available power.
It was conjectured that, when it comes to affine string-to-string functions,
$\lampa$ computes exactly \emph{first-order transductions} and its commutative
variant the larger class of \emph{regular transductions}~\cite{aperiodic}.
The latter was proven in follow-up work~\cite{freeadditives,titophd} and the main
contribution of this paper is to tackle the former, extending and
generalizing~\cite[Theorem 1.7]{aperiodic}.

\begin{restatable}{thm}{mainplanarstring}
\label{thm:main-planar-str}
Affine string-to-string $\lampa$-definable functions and first-order string transductions
coincide.
\end{restatable}

That every first-order transduction is $\lampa$-definable follows from
a decomposition lemma that states that all such transductions are compositions
of elementary building blocks that can be coded in $\lampa$. Most
of this coding was already done in~\cite[Theorem 4.1]{aperiodic}.
The more interesting direction is the converse, which is proven using a
semantic evaluation argument to compile $\lampa$-definitions into
\emph{two-way planar reversible finite transducers} (2PRFTs), a variant of two-way transducers
that were recently shown to capture exactly first-order transductions~\cite{planartrans}.
The semantics in question targets a non-symmetric monoidal-closed category
$\catTransition_\Gamma$ in which transitions of 2PRFTs find a natural home as
morphisms.

Much like other semantic evaluation arguments like Hillebrand and Kanellakis'
or in higher-order model checking~\cite{grellois,grelloismellies}, a nice aspect
is that automata will be computed in a very straightforward way from terms once
things are set up, and this computation will even be polynomial-time here provided
we are given a normal term as input. However, one difficulty we are going to run into will
have to do with the fact that our calculus is not linear but \emph{affine} and
that $\catTransition_\Gamma$ does not have a terminal object. We will
still manage to use it as an interpretation target for $\lampa$
by noticing that it carries a $\catname{Poset}$-enriched structure and
showing that this is enough to have an interpretation\footnote{We suspect this
can be characterized as an initiality theorem stating that there is a minimal oplax
strong monoidal-closed functor from initial affine monoidal-closed categories
to $\catname{Poset}$-with-$\bot$-enriched monoidal-closed categories, but we leave this characterization,
which would require dealing with tensors in the $\lampa$-calculus, for future work.} of terms $\interp{-}$ such
that $\interp{t_\NF} \le \interp{t}$ when $t$ evaluates to $t_\NF$ via $\beta$-reduction.

\paragraph{Plan of the paper} In Section~\ref{sec:background}, we review
the standard notions concerning $\catname{Poset}$-enriched categories and
the non-commutative $\lambda$-calculus we will require. We then explain
in Section~\ref{sec:fotranslambda} what it means for a string-to-string function
to be $\lampa$-definable and what 2PRFTs are. The latter we take as an opportunity
to introduce $\catTransition_\Gamma$ and define transitions of 2PRFTs
as morphisms in those categories. In Section~\ref{sec:main}, we prove
Theorem~\ref{thm:main-planar-str}. Finally, we conclude with some observations
concerning the commutative case and tree transductions that follow from our
work in Section~\ref{sec:main} before evoking some further research directions
that could most probably build on the material presented here.

\paragraph{Related work} For a more comprehensive overview of ``implicit automata
in $\lambda$-calculi'', one may consult the introductions of~\cite{freeadditives,titophd}.
Regarding this paper more specifically, the other most relevant works are the one
leading up to the introduction of 2PRFTs in~\cite{planartrans}, which mostly comes
from Hines' suggestion in~\cite{hinesplanar}, which itself drew on
Girard's geometry of interaction programme~\cite{towardsgoi} and Temperley-Lieb
algebras~\cite{AbramskyTemperleyLieb,PresentationsTL}.
We use categorical automata in the sense of
Colcombet and Petrişan~\cite{colcombetpetrisan} for practical purposes
similar to~\cite{freeadditives}. While categorical frameworks are used
to give generic results for various classes of automata by, e.g., viewing them
as algebras~\cite{adamek-trnkova-book,arbib-adjoint-machines,goguen-realization-cc}, as
coalgebras~\cite{rutten-coalg} or as dependent lenses~\cite{spivak2020poly},
here we simply use a categorical definition of 2PRFTs so that it may be easily
related to the semantics of the $\lampa$-calculus. In particular, we will
focus on the categories $\catTransition_\Gamma$ (for $\Gamma$ ranging over alphabets) and no other categories for most of the paper.
While we are not aware of a source that defines exactly $\catTransition_\Gamma$,
it is likely that close matches exist in the literature as it admits a straightforward
inductive presentation.
A similar construction is the operad of spliced
words in~\cite[Example 1.2]{MelliesZeilbergerContours}, where
the more general operad of spliced contours~\cite[Definition 1.1]{MelliesZeilbergerContours} is used
to analyze and generalize the Chomsky-Sch\"utzenberger representation theorem.

\section{Background}
\label{sec:background}

\subsection{Categorical preliminaries}
In the rest of this subsection, we list the key definitions related to
$\catname{Poset}$-enriched strict monoidal categories. In particular, we specialize
the definitions from general enriched category~\cite{basicsenrichedcat} to the
$\catname{Poset}$-enriched case for the convenience of the reader.

For notations, we use $\circ$ for composition, but also $;$ for composition
written in the reverse order ($f ; g = g \circ f$) when it is more convenient.
We write $\id_A$ for the identity at object $A$ and $\Hom{\cC}{A}{B}$ for the
set of morphisms of $\cC$ with domain $A$ and codomain $B$. When the ambient
category is clear from context or $\Set$, we sometimes write $f : A \to B$ to mean
that $f$ is a morphism from $A$ to $B$.

\begin{definition}
A \emph{category} $\cC$ is said to be \emph{$\catname{Poset}$-enriched} if it is
enriched in the category of posets and monotone functions, i.e., if
  for every objects $A$ and $B$, $\Hom{\cC}{A}{B}$ is a partially
  ordered set and composition $\Hom{\cC}{B}{C}\times\Hom{\cC}{A}{B}\to\Hom{\cC}{A}{C}$
  is monotone with respect to the product ordering on $\Hom{\cC}{B}{C}\times\Hom{\cC}{A}{B}$.

  A \emph{functor} $T : \cC \to \mathcal{D}$ between $\catname{Poset}$-enriched categories is
\emph{$\catname{Poset}$-enriched} if it is enriched in the
category of posets and monotone functions, i.e., for any objects $A$ and
  $B$ of $\cC$, $T_{A,B} : \Hom{\cC}{A}{B} \to \Hom{\mathcal{D}}{T(A)}{T(B)}$ is monotone.
A \emph{$\catname{Poset}$-enriched natural transformation} between
$\catname{Poset}$-enriched functors is just a natural transformation.
\end{definition}

\begin{definition}
A ($\catname{Poset}$-enriched) category $\cC$ is strict monoidal when
we have an (enriched) functor $\tensor : \cC^2 \to \cC$ and an object $\unit$
such that $(\tensor, \unit)$ and $(\tensor, \id_\unit)$ induce monoid structures
on the objects and morphisms of $\cC$.
\end{definition}

Note that we did not include a symmetry $A \tensor B \cong B \tensor A$ in our
definition of monoidal.
Although the coming definitions also make sense for non-strict monoidal categories,
throughout the rest of the paper, we will consider strict monoidal
categories only.

\begin{definition}
A ($\catname{Poset}$-enriched) monoidal category $\left(\cC, \lI, \tensor\right)$ is
\emph{closed} if for each object $X$ of $\cC$, the (enriched) functor
$(- \tensor X) : \cC \to \cC$ has an (enriched) right adjoint $(X \lolly -) : \cC \to \cC$, i.e.,
for any triple of objects $X, Y, Z$ we have a natural isomorphism
  $\Lambda_{X, Y, Z} : \Hom{\cC}{X \tensor Y}{Z} \cong \Hom{\cC}{X}{Y \lolly Z}$ which
is monotone.
  We will write $\ev_{Y, Z}$ for the counit of the adjunction\footnote{It is equal to $\Lambda_{Y,Y,Z}^{-1}(\id_{Y \lolly Z}) : (Y \lolly Z) \tensor Y \to
  Z$ by definition and corresponds to an evaluation morphism $(Y \lolly Z) \tensor Y \to Z$ used to interpret function application.}.
\end{definition}

As we are also interested in categories with a dualising structure, it
would be natural to ask for an (enriched) compact-closed category.
However, to the author's knowledge, there is no clear consensus on the
``correct'' definition of compact-closed category when the tensor is
not symmetric. One such candidate, a restricted version of pivotal
category, was put forward by Freyd \& Yetter~\cite{freydyetter89} and
is appropriate to our needs. The following definitions come from
Selinger's survey of graphical languages~\cite{selingergraphical}.

\begin{definition}
  In a monoidal category, an \emph{exact pairing} between two objects $A$ and
  $B$, is given by a pair of maps $\eta : \unit \to B \otimes A, \varepsilon : A \otimes B \to \unit$,
  called respectively \emph{cups} and \emph{caps},
  such that the following two triangles commute\footnote{These
  equations are typically called the ``yanking'' or ``zigzag'' equations.}:
  \[
  \begin{tikzcd}
    A \arrow[r, "\id_A\otimes\eta"] \arrow{dr}[swap]{\id_A} & A \otimes B \otimes A \arrow[d, "\varepsilon\otimes\id_A"]\\
    & A
  \end{tikzcd}
  \qquad
  \begin{tikzcd}
    B \arrow[r, "\eta\otimes\id_{B}"] \arrow{dr}[swap]{\id_{B}} & B \otimes A \otimes B \arrow[d, "\id_{B}\otimes\varepsilon"]\\
    & B
  \end{tikzcd}
  \]

  In an exact pairing, $B$ is called the \emph{right dual} of $A$ and
  A is called the \emph{left dual of B}.
\end{definition}

\begin{definition}
A monoidal category is \emph{left (resp. right) autonomous} if every object $A$ has
a left (resp. right) dual, which we denote $\leftdual{A}$ (resp $\dual{A}$). It is \emph{autonomous} if it is both left and right autonomous.
\end{definition}

Any choice of duals $\dual{A}$ and cups and caps $\varepsilon_A$, $\eta_A$ for
every object $A$ in a left autonomous category $\cC$ extends
$(-)^*$ to a functor $\cC \to \cC^\op$ by setting
$f^* = (\eta_A \tensor \id_{\dual{B}}); (\id_{\dual{A}} \tensor f \tensor \id_{\dual{B}}); (\id_{\dual{A}} \tensor \varepsilon_B)$
when $f : A \to B$. We then also have that the chosen
cups and caps are natural transformations. Similar definitions can be
made for right autonomous categories.

\begin{definition}
  \label{def:pivotal}
A \emph{pivotal category} is a right autonomous category equipped with
a monoidal natural transformation $i_A : A \to \dual{\dual{A}{}}$.
We are primarily interested in the case where $i_A$ is the identity,
in which case, we refer to it as a \emph{strict} pivotal category.
\end{definition}

The following lemma shows that pivotal categories allow us treat left
and right duals as the same and define closure in terms of duals.

\begin{lemma}
  \label{lem:pivotal-closed}
Pivotal categories are autonomous and closed.
\end{lemma}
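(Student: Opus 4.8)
The plan is to start from a right autonomous pivotal category $\cC$ (with chosen right duals $\dual{A}$, caps $\varepsilon_A : A \tensor \dual A \to \unit$, cups $\eta_A : \unit \to \dual A \tensor A$, and pivotal isomorphism $i_A : A \to \dual{\dual A}$, which by the strictness convention we may take to be an identity, though I will write the argument so that it works for a general monoidal natural transformation $i$) and exhibit, for each object $A$, a left dual together with the corresponding exact pairing. The natural candidate for a left dual of $A$ is again $\dual A$: we already have $\eta_A : \unit \to \dual A \tensor A$ and $\varepsilon_A : A \tensor \dual A \to \unit$, which is exactly a pair consisting of a cup $\unit \to \dual A \tensor A$ and a cap $A \tensor \dual A \to \unit$ witnessing that $\dual A$ is a \emph{left} dual of $A$ — provided the zigzag equations hold in that orientation. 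One of the two zigzag identities is literally one of the defining equations of the right-dual pairing $(\eta_A, \varepsilon_A)$; the other is the ``reversed'' zigzag, and this is exactly where the pivotal structure is needed: I would use naturality of $i$ together with the identity expressing $i$ as a monoidal natural transformation to transport the second right-dual zigzag (which a priori lives at $\dual{\dual A}$) back down to $A$, turning it into the missing equation. This establishes that $\cC$ is left autonomous, hence autonomous.

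Next, for closure, the plan is to define, for objects $X$ and $Z$, the internal hom by $X \lolly Z := Z \tensor \dual X$ and to show that $(- \tensor X)$ is left adjoint to $(- \tensor \dual X)$ via the bijection
\[
\Lambda_{X,Y,Z} : \Hom{\cC}{Y \tensor X}{Z} \longrightarrow \Hom{\cC}{Y}{Z \tensor \dual X}
\]
sending $f$ to $(\id_Y \tensor \eta_X) ; (f \tensor \id_{\dual X})$, with inverse sending $g : Y \to Z \tensor \dual X$ to $(g \tensor \id_X) ; (\id_Z \tensor \varepsilon_X)$. That these two assignments are mutually inverse follows by a straightforward string-diagram computation using the two zigzag equations for $X$ (here I need exactly the right-dual pairing, now available in both orientations thanks to the previous paragraph). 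Naturality of $\Lambda$ in $Y$ and $Z$ is immediate from functoriality of $\tensor$ and interchange; naturality in $X$ uses naturality of $\varepsilon$ and $\eta$, which, as noted in the excerpt just before Definition~\ref{def:pivotal}, is automatic once duals are chosen. In the $\catname{Poset}$-enriched setting, $\Lambda_{X,Y,Z}$ and its inverse are built solely from $\tensor$, composition, identities, and the fixed morphisms $\eta_X, \varepsilon_X$, all of which are monotone operations, so $\Lambda$ is an order-isomorphism; hence the adjunction is $\catname{Poset}$-enriched, giving enriched closedness. The counit $\ev_{Y,Z} = \Lambda_{Y,Y,Z}^{-1}(\id_{Y \lolly Z})$ then unwinds to $(\id_Z \tensor \varepsilon_Y) : (Z \tensor \dual Y) \tensor Y \to Z$ (up to the strict-monoidal re-bracketing), as one expects for an evaluation map.

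The main obstacle I anticipate is purely bookkeeping rather than conceptual: getting the second zigzag equation in the ``left dual'' orientation honestly out of the pivotal data, since this is the one place the argument genuinely uses Definition~\ref{def:pivotal} and not just the bare existence of right duals. It is tempting to wave at ``string diagrams make this obvious'', but one has to be careful that in a \emph{non-symmetric} monoidal category the diagram for the reversed zigzag is not just a mirror image of the given one, and the identification of the two relies on $i$ being a \emph{monoidal} natural transformation (so that it interacts correctly with the cups and caps of tensor products) rather than merely a natural isomorphism. In the strict-pivotal case $i = \id$ this collapses to checking that $(\eta_A \tensor \id_A) ; (\id_A \tensor \varepsilon_A) = \id_A$ holds, which I would derive by applying $(-)^*$-functoriality and the snake equations at $\dual A$; I would relegate the general-$i$ version to a short remark, since the applications in the paper only need the strict case.
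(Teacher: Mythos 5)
Your closure half is essentially the paper's proof: set $X \lolly Z \eqdef Z \tensor \dual{X}$, curry with a cup, uncurry with a cap, and the yanking equations give mutual inversion; the remarks on naturality and on monotonicity of $\Lambda$ in the $\catname{Poset}$-enriched setting are correct and even a bit more explicit than the paper. The problem is in the autonomy half. With the definition of exact pairing used here ($\eta : \unit \to B \tensor A$, $\varepsilon : A \tensor B \to \unit$ exhibits $B$ as a \emph{right} dual of $A$ and $A$ as a \emph{left} dual of $B$), the maps $\eta_A : \unit \to \dual{A} \tensor A$ and $\varepsilon_A : A \tensor \dual{A} \to \unit$ cannot ``witness that $\dual{A}$ is a left dual of $A$ provided the zigzag equations hold in that orientation'': a pairing exhibiting $\dual{A}$ as a left dual of $A$ must consist of maps $\unit \to A \tensor \dual{A}$ and $\dual{A} \tensor A \to \unit$, so what is missing is not an extra equation for $(\eta_A,\varepsilon_A)$ (both of its zigzags are already part of the right-dual data) but morphisms of the correct type. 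Correspondingly, the identity you propose to verify in the strict case, $(\eta_A \tensor \id_A);(\id_A \tensor \varepsilon_A) = \id_A$, does not typecheck: the first map lands in $\dual{A} \tensor A \tensor A$ while the second expects $A \tensor A \tensor \dual{A}$.

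The repair is the paper's one-line argument, which you nearly reach when you speak of transporting a zigzag ``which a priori lives at $\dual{\dual{A}{}}$'': take as the left-dual pairing of $A$ the \emph{right-dual pairing of $\dual{A}$}, namely $\eta_{\dual{A}} : \unit \to \dual{\dual{A}{}} \tensor \dual{A}$ and $\varepsilon_{\dual{A}} : \dual{A} \tensor \dual{\dual{A}{}} \to \unit$, transported along $i_A : A \cong \dual{\dual{A}{}}$; in the strict case these literally are maps $\unit \to A \tensor \dual{A}$ and $\dual{A} \tensor A \to \unit$. Both zigzag equations for this pairing are then the two defining zigzags of the pairing at $\dual{A}$, so in the strict case there is nothing left to derive, and in the general case one only conjugates by $i_A$ using its naturality and monoidality, as you indicate. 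Note also that the cup $\eta_X : \unit \to X \tensor \dual{X}$ and cap $\varepsilon_X : \dual{X} \tensor X \to \unit$ that make your $\Lambda$ and $\Lambda^{-1}$ typecheck are exactly this left-oriented pairing, not the right-dual pairing you fixed at the outset; this orientation slip is harmless, but only once left autonomy has been established, which is why getting the first half right is not mere bookkeeping.
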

\begin{proof}
Since $\dual{{\dual{A}}{}} \cong A$ and $\dual{{\dual{A}}{}}$ is the right
dual of $\dual{A}$, it follows that $\dual{A}$ is also left dual of $A$.

To show monoidal closure, define the functor $(B \lolly -) \eqdef (- \tensor \dual{B})$.
We can construct the adjunction by setting
$\Lambda_{A,B,C}(f) = (\id_A \tensor \eta_B) ; (f \tensor \id_{B^*})$, which
has inverse $\Lambda_{A,B,C}^{-1}(g) = (g \tensor \id_B) ; (\id_C \tensor \varepsilon_B)$.
That $\Lambda$ and $\Lambda^{-1}$ are inverse is provable thanks to the yanking
equations.
\end{proof}

\subsection{The planar $\lambda$-calculus $\lampa$}

    For most of the paper, we will be working in the non-commutative fragment of the affine $\lambda$-calculus
    that we call $\lampa$. \emph{Types} of $\lampa$, that we typically write with
    the greek letter $\tau, \sigma$ and $\kappa$, are inductively generated by a designated base type $\basety$
    and two type constructors $\lin$ and $\to$ corresponding respectively to
    \emph{affine} and unrestricted function types.
    We will have the following restrictions for the function spaces built with $\multimap$:
    \begin{itemize}
      \item arguments must be used at most once\\
        \phantom{a} \hfill  {\footnotesize ($\lambda f. \lambda x. \; f \; (f \; x)$ does not have type $(\basety \lin \basety) \lin \basety \lin \basety$)}
      \item arguments must occur in order in application.\\
        \phantom{a} \hfill  {(\footnotesize $\lambda x. \lambda f. \; f \; x$ does not have type $\basety \lin (\basety \lin \basety) \lin \basety$)}
    \end{itemize}
    We introduce both the syntax and the typing rules of $\lampa$
    (which, in particular, enforce those restrictions) in Figure~\ref{fig:lamp}.
    Throughout, we formally need to manipulate terms that come with their type
    derivations rather than raw terms, but we will often simply write out
    terms rather than typing judgement for legibility. 
    We call the fragment where types do not contain the non-affine arrow $\to$ \emph{purely affine}.

    To make those term compute, we define the capture-avoiding substitution of $x$ by a term $u$ in $t$ by $t[u/x]$
    as usual, as well as the relation $\to_\beta$ of \emph{$\beta$-reduction} as being
    the least relation satisfying
    $(\lambda x. \; t) \; u \to_\beta t[u/x]$ for all well-typed expressions (of the same type)
    and being closed by congruence. Call $\to_\beta^*$ its reflexive transitive closure.
    An expression of shape $(\lambda x. \; t) \; u$ is
    called a \emph{$\beta$-redex} and a term containing no such redex is called
    \emph{normal}.
    The least congruence
    containing all clauses $t =_\eta \lambda x. \; t \; x$ for every $t$ with no
    occurrence of $x$ which has a function type is called $\eta$-equivalence.
    Two terms are called $\beta\eta$-equivalent if they can be
    related by the least equivalence relation containing $\to_\beta$ and $=_\eta$.
    We write $=_{\beta\eta}$ for $\beta\eta$-equivalence.
    
    Every rewriting sequence involving $\to_\beta$ and well-typed terms terminates.
    \begin{proposition}[standard argument, see also {\cite[Proposition 2.3]{aperiodic}}]
      \label{prop:normalization}
      For every $\Psi; \; \Delta \vdash t : \tau$, there is a normal term
      $t_\NF$ with the same typing
      such that $t \to_\beta^* t_\NF$.
    \end{proposition}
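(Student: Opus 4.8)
The plan is to derive the statement from two ingredients: subject reduction for $\lampa$, and strong normalization of the ordinary simply-typed $\lambda$-calculus, which I would simply import. Concretely, I would first prove that $\to_\beta$ preserves typings, so that it indeed relates well-typed terms of the same type as in the definition; then show that every $\to_\beta$-reduction sequence issued from a well-typed term is finite, so that any maximal such sequence terminates on a term $t_\NF$ containing no redex, i.e. a normal term; and finally, iterating subject reduction along $t \to_\beta^* t_\NF$, conclude $\Psi; \Delta \vdash t_\NF : \tau$.

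\emph{Subject reduction.} I would establish, by induction on typing derivations, a substitution lemma of the form
\[
\Psi ; \Delta_1, x : \sigma, \Delta_2 \vdash s : \tau \ \text{ and }\ \Psi ; \Delta' \vdash r : \sigma \ \Longrightarrow\ \Psi ; \Delta_1, \Delta', \Delta_2 \vdash s[r/x] : \tau ,
\]
together with an analogous statement for a variable declared in the non-affine zone $\Psi$, where $r$ must be typable with empty affine zone so that it may be copied or discarded. The only point requiring more care than in the commutative case is that the affine zone is an \emph{ordered} context: but since $x$ occupied exactly the slot between $\Delta_1$ and $\Delta_2$ and capture-avoiding substitution replaces the occurrence of $x$ in place, the free variables of $r$ are inserted exactly where the typing rules allow them, so no reordering is ever needed. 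Subject reduction for a head redex $(\lambda x.\,s)\,r \to_\beta s[r/x]$ is then immediate by inverting the typing of the redex (distinguishing whether the $\lambda$ was introduced by the $\lin$- or the $\to$-introduction rule) and applying the appropriate form of the substitution lemma; the congruence cases follow by a routine induction on the derivation of the one-step reduction.

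\emph{Termination.} I would use the structure-forgetting erasure $(-)^\circ$ that collapses both arrows $\lin$ and $\to$ into a single simple arrow $\Rightarrow$, sends $\basety$ to a base type, and acts on terms as the obvious homomorphism (it only deletes type information, leaving the term skeleton and all binders untouched). By induction on derivations, every $\lampa$ typing derivation of $\Psi ; \Delta \vdash t : \tau$ yields a derivation of $\Psi^\circ, \Delta^\circ \vdash t^\circ : \tau^\circ$ in the simply-typed $\lambda$-calculus (the two introduction rules both map to $\Rightarrow$-introduction, likewise for elimination, and the unordered simply-typed context freely absorbs the order of $\Delta$ and the use of weakening). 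Since erasure commutes with substitution, $(t[u/x])^\circ = t^\circ[u^\circ/x]$, it maps a one-step reduction $t \to_\beta t'$ to a one-step reduction $t^\circ \to_\beta t'^\circ$, and it does not trivialize because the image of a redex is a redex. Hence an infinite $\to_\beta$-sequence from a well-typed $t$ would project to an infinite reduction sequence from the well-typed simply-typed term $t^\circ$, contradicting the (standard) strong normalization of the simply-typed $\lambda$-calculus; so $\to_\beta$ is strongly, in particular weakly, normalizing on well-typed terms, which finishes the proof.

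I do not expect a genuine obstacle here: this is precisely the ``standard argument'' advertised in the statement, and every ingredient is either imported or bookkeeping. If a self-contained proof were preferred, one could instead run Tait's reducibility method directly on $\lampa$, defining $\mathrm{Red}_\tau$ by induction on $\tau$ with $\mathrm{Red}_{\sigma \lin \tau}$ and $\mathrm{Red}_{\sigma \to \tau}$ both being ``sends reducible arguments to reducible results'', or, for weak normalization only, use the Turing-style argument of repeatedly contracting an innermost redex of maximal type-degree; in all cases the only mildly delicate point is managing the two-zone, order-sensitive context, and no new idea is required.
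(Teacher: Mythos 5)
Your proposal is correct and is exactly the ``standard argument'' the paper appeals to without spelling it out: subject reduction for the two-zone ordered system (via the substitution lemma you state) plus termination imported from strong normalization of the simply-typed $\lambda$-calculus through the arrow-collapsing erasure, which is forgetful on terms and hence maps reduction steps one-to-one. No gap; the ordered-context bookkeeping is handled correctly, matching what the paper delegates to the cited reference.
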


\begin{figure}
\begin{center}
\begin{mathpar}
\inferrule*{ }
            {\judgement{\Psi,\hasType{x}{\tau}, \Psi'}{\Delta}{x}{\tau}}
\and
\inferrule*{ }
            {\judgement{\Psi}{\Delta, \hasType{x}{\tau}, \Delta'}{x}{\tau}}
\\
\inferrule*{\judgement{\Psi}{\Delta, \hasType{x}{\tau}}{t}{\sigma}}
           {\judgement{\Psi}{\Delta}{\lambda{}x.t}{\tau\lolly\sigma}}
\and
\inferrule*{\judgement{\Psi}{\Delta}{t}{\tau\lolly\sigma} \and
            \judgement{\Psi}{\Delta^{\prime}}{u}{\tau}}
            {\judgement{\Psi}{\Delta, \Delta^{\prime}}{t \; u}{\sigma}}
\\
\inferrule* {\judgement{\Psi, \hasType{x}{\tau}}{\Delta}{t}{\sigma}}
            {\judgement{\Psi}{\Delta}{\lambda x.t}{\tau\to\sigma}}
\and
\inferrule* {\judgement{\Psi}{\Delta}{t}{\tau\to\sigma}
                  \\ \judgement{\Psi}{\nullEnv}{u}{\tau}}
                 {\judgement{\Psi}{\Delta}{t \; u}{\sigma}}

\end{mathpar}
  \end{center}
  \caption{Syntax and typing rules for $\lampa$. The contexts $\Psi$ and $\Delta$ are
  lists of pairs $x : \tau$ containing a variable name $x$ and some type $\tau$.
  We assume that all variables appearing in a context and under binders are pairwise distinct and that terms and derivations are defined up to $\alpha$-renaming.}
\label{fig:lamp}
\end{figure}

\section{First-order string-to-string transductions in the planar affine $\lambda$-calculus}
\label{sec:fotranslambda}

\subsection{Definable string-to-string functions in the planar affine $\lambda$-calculus}

In order to discuss string functions in $\lampa$, we need to discuss
how they are encoded. For that, we use the same framework as
in~\cite{moreau23syn,titophd}.
In the pure (i.e.\ untyped) $\lambda$-calculus and its polymorphic typed variants
such as System F, the canonical way to encode
inductive types is via \emph{Church encodings}.
Such encodings are typable in the simply-typed $\lambda$-calculus by dropping
the prenex universal quantification that comes with them in polymorphic calculi.
For instance, for natural numbers and strings over $\{a,b\}$, writing $\Church{w}$ for the
Church encoding of $w$, we have
$
\Church{aab} =
  \lambda a. \lambda b. \lambda \epsilon. \; \underline{aab} = 
  \lambda a. \lambda b. \lambda \epsilon. \; a \; (a \; (b \; \epsilon))
$.

Conversely, a consequence of normalization is that any closed simply typed
$\lambda$-term ``of type string'' is $\lamequiv$-equivalent to the Church
encoding of some string.
In the rest of this paper, we use a type for Church encodings of strings
that is finer than usual and not expressible without $\lin$, first introduced
in~\cite[\S5.3.3]{girardLL}.

\begin{definition}
  Let $\Sigma$ be an alphabet. We define $\Str_\Sigma$ as
  $\underbrace{(\basety \lin
  \basety) \to \ldots \to (\basety \lin \basety)}_{\text{$|\Sigma|$ times}} \to \basety \to \basety$.
\end{definition}

\begin{definition}
Given an alphabet $\Sigma = \{a_1, \ldots, a_n\}$, define the
signature $\underline{\Sigma}$ as $a_1 : \basety \lin \basety, \ldots, a_n : \basety \lin \basety, \epsilon : \basety$.
For every word $w \in \Sigma^*$ define the typed term $\underline{\Sigma}; \; \cdot \vdash \underline{w} : \basety$
and the closed term $\Church{w} : \Str_\Sigma$ by
\[
\underline{\epsilon} = \epsilon
\qquad
  \underline{a_iw'} = a_i \; \underline{w'}
  \qquad \text{and} \qquad
\Church{w} = \lam a_1 \ldots \lam a_n. \lam \epsilon. \; \underline{w}\]
\end{definition}

We can then show the following by inspecting the normal form and using Proposition~\ref{prop:normalization}.

\begin{lemma}
For every $\underline{\Sigma}; \cdot \vdash t : \basety$, $t$ is $\beta\eta$-equivalent
to a unique $\underline{w_t}$ and, a fortiori,
for every $\underline{\Sigma}; \cdot \vdash u : \basety$, $u$ is
  $\beta\eta$-equivalent to a unique $\Church{w_u}$.
\end{lemma}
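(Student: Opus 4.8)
The plan is to reduce both claims to a structural analysis of $\beta$-normal terms. For the first, given $\underline{\Sigma}; \cdot \vdash t : \basety$, Proposition~\ref{prop:normalization} provides a $\beta$-normal $t_\NF$ with $t \to_\beta^* t_\NF$ and the same typing. I would then prove, by induction on term size: \emph{every $\beta$-normal term $\underline{\Sigma}; \cdot \vdash s : \basety$ is syntactically equal to $\underline{w}$ for some $w \in \Sigma^*$}. Such an $s$ is not an abstraction, since its type $\basety$ is not a function type whereas the two abstraction rules only produce function types; hence $s$ is an application spine $h\,s_1 \ldots s_k$ whose head $h$ is a variable, and since the affine context is empty $h$ must be one of $a_1, \ldots, a_n, \epsilon$ drawn from the intuitionistic context. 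If $h = \epsilon$ then $\basety$ admits no argument, so $k = 0$ and $s = \underline{\epsilon}$. If $h = a_i$ then typing forces $k = 1$ — one argument yields type $\basety$, and a second would require $\basety$ to be a function type — so $s = a_i\,s_1$ with $\underline{\Sigma}; \cdot \vdash s_1 : \basety$ again $\beta$-normal, whence by the induction hypothesis $s_1 = \underline{w'}$ and $s = \underline{a_i w'}$. Applying this to $t_\NF$ produces $w_t$ with $t =_{\beta\eta} t_\NF = \underline{w_t}$, giving existence.

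For uniqueness I would show that $\underline{(-)}$ is injective up to $=_{\beta\eta}$. The most economical route is semantic: take the standard set-theoretic model with $\interp{\basety} = \Sigma^*$, each $a_i : \basety \lin \basety$ interpreted as the prepending map $v \mapsto a_i v$, and $\interp{\epsilon}$ the empty word; the planar affine terms embed into the simply-typed $\lambda$-calculus, where this is a sound model of $=_{\beta\eta}$, and under it $\interp{\underline{w}} = w$, so $\underline{w} =_{\beta\eta} \underline{w'}$ forces $w = w'$. (Equivalently, by confluence and strong normalization of $\beta\eta$ on simply-typed terms, $\underline{w}$ is the $\beta\eta$-normal form of its class, and distinct words yield distinct normal forms.)

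For the second claim — which I read as a statement about closed terms $\cdot; \cdot \vdash u : \Str_\Sigma$, that being the type of the Church encodings $\Church{w}$ — I would note that a closed $\beta$-normal term of type $\Str_\Sigma = (\basety \lin \basety) \to \ldots \to (\basety \lin \basety) \to \basety \to \basety$ has no head variable available in the empty context, so it must abstract at every arrow; working with $\eta$-long $\beta$-normal forms so that ``term of function type $=$ abstraction'' holds literally, it is of the form $\lambda a_1 \ldots \lambda a_n. \lambda \epsilon.\; s$ with $\underline{\Sigma}; \cdot \vdash s : \basety$ $\beta$-normal, and the context so obtained is exactly $\underline{\Sigma}$. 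By the first claim $s = \underline{w}$ for a unique $w$, so $u =_{\beta\eta} \Church{w}$; uniqueness transfers since $\Church{w} =_{\beta\eta} \Church{w'}$ entails $\underline{w} =_{\beta\eta} \underline{w'}$ (strip the common prefix of binders, or apply both sides to $a_1, \ldots, a_n, \epsilon$).

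I do not expect a serious obstacle: the content is the normal-form analysis above. The two points needing a little care are the treatment of $\eta$ when recognizing the shape of a normal inhabitant of the function type $\Str_\Sigma$ — cleanly handled by passing to $\eta$-long $\beta$-normal forms — and making precise that $\beta\eta$-conversion does not identify distinct words, for which the $\Sigma^*$-valued model (or confluence of $\beta\eta$) suffices.
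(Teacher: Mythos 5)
Your argument is correct and is essentially the paper's own route: the paper proves this lemma exactly by normalizing via Proposition~\ref{prop:normalization} and inspecting the shape of normal inhabitants, which is what your spine analysis of $\beta$-normal terms of type $\basety$ over $\underline{\Sigma}$ (and of $\eta$-long normal closed terms of type $\Str_\Sigma$, the evidently intended reading of the second clause) carries out in detail, with uniqueness following from confluence/normal forms as you note. No gap to report.
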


As a consequence, $\lampa$-terms of type $\Str_\Sigma \to \Str_\Gamma$
correspond to functions $\Sigma^* \to \Gamma^*$, but have a limited expressivity.
We consider a natural
extension of these by allowing to emulate a limited kind of polymorphism via
\emph{type substitutions} $\tau[\kappa]$ defined as follows.
\[
\basety[\kappa] = \kappa \qquad \text{and} \qquad (\tau \lin \sigma)[\kappa] = \tau[\kappa] \lin \sigma[\kappa]
\]

Type substitutions extend in the obvious way to typing contexts, and even
to \emph{typing derivations}, so that 
$\Psi; \; \Delta \vdash t : \tau$ entails $\Psi[\kappa]; \; \Delta[\kappa] \vdash t : \tau[\kappa]$.
In particular, it means that a Church encoding $t : \Str_{\bf \Sigma}$ is
also of type $\Str_{\bf \Sigma}[\kappa]$ for any type $\kappa$.
This ensures that the following notion of definable string-to-string functions
makes sense and is closed under function composition.

\begin{definition}
\label{def:lam-definable}
A function $f : \Sigma^* \to \Gamma^*$ is called \emph{affine $\lampa$-definable} when
there exists a \emph{purely affine} type $\kappa$ together with a $\lambda$-term
$\mathtt{f} : \Str_{\Sigma}[\kappa] \lin \Str_{\Gamma}$
such that $f$ and $\mathtt{f}$ coincide up to Church encoding; i.e.,
for every string $t \in \Sigma^*$, $\Church{f(t)} =_{\lamequiv} \mathtt{f} \; \Church{t}$.

\end{definition}

\begin{example}
\label{ex:lam-rev}
  The function $\texttt{reverse} : \Sigma^* \to \Sigma^*$ that reverses its input
is affine $\lampa$-definable. Supposing that we have $\Sigma = \{a_1, \dots, a_k\}$,
one $\lampa$-term that implements it is
\[\lam s. \lam a_1. \ldots \lam a_k. \lam \epsilon.\; s \; (\lam x.
  \lam z. x \; (a_1 \; z)) \ldots (\lam x. \; (a_k \; z)) \; (\lam x.x) \;
  \epsilon
  \;:\; \Str_\Sigma[\basety \lin \basety] \lin \Str_\Sigma \]
\end{example}

This definition involves terms defined in the full calculus that still requires
to work with the $\to$ type constructor that occurs in $\Str$.
But we also have an equivalent characterization in terms of purely affine terms.
This characterization is obtained by inspecting the
normal form of a $\lampa$ definition.

\begin{lemma}[{particular case of~\cite[Lemma 5.25]{titophd}, easier to prove from Proposition~\ref{prop:normalization}}]
\label{lem:laml-niceshape-str}
Let $\Sigma = \{a_1, \ldots, a_n\}$
and $\Gamma = \{ b_1, \ldots, b_k\}$ be alphabets.
Up to $\lamequiv$-equivalence, every term of type $\Str_{\Sigma}[\kappa] \lin \Str_{\Gamma}$ is of the shape $\lam s. \lam b_1. \ldots \lam b_k.
  \lam \epsilon. \; o \; (s \; d_1 \; \ldots \; d_n \; d_\epsilon)$
such that $o$, $d_\epsilon$ and the $d_i$s are purely linear $\lampa$-terms with no occurrence of $s$, that is, terms such as we
have typing derivations
\[
\underline{\Gamma} ; \cdot \vdash o : \kappa \lin \basety \qquad
\underline{\Gamma} ; \cdot \vdash d_i : \kappa \lin \kappa \qquad
\underline{\Gamma} ; \cdot \vdash d_\epsilon : \kappa
\]
\end{lemma}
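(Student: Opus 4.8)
The plan is to normalize the given term using Proposition~\ref{prop:normalization} and then analyze the syntactic shape of the resulting $\beta$-normal term of type $\Str_{\Sigma}[\kappa] \lin \Str_{\Gamma}$. First I would argue that, up to $\eta$-expansion, any such normal term must begin with the abstractions $\lam s. \lam b_1. \ldots \lam b_k. \lam \epsilon.\; (\cdots)$: since the type is $\Str_\Sigma[\kappa] \lin (\basety \lin \basety) \to \cdots \to (\basety \lin \basety) \to \basety \to \basety$, a normal term of this type is either a head abstraction or a head variable applied to arguments; since the context is empty, the only available head variables come from prior abstractions, so performing $\eta$-expansion as needed we may assume the term is fully $\eta$-long and hence starts with exactly these $k+2$ binders, leaving a normal term $N$ of base type $\basety$ in the context $s : \Str_\Sigma[\kappa]$, $b_1, \ldots, b_k : \basety \lin \basety$, $\epsilon : \basety$.

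The key step is then to show that $N$ has the shape $o\;(s\;d_1\;\ldots\;d_n\;d_\epsilon)$ where $s$ occurs exactly once, precisely in the displayed position. Since $N$ is normal of base type, it is a head variable applied to a spine of arguments; the head is one of $s$, the $b_j$, or $\epsilon$. The linearity/affineness constraints of $\lampa$ force $s$ to occur at most once in $N$, and a typing/occurrence count argument (the only way to consume the outer $(\basety\lin\basety)$ arguments $d_1,\dots,d_n$ at type $\kappa$ is through an application of $s$, as the $b_j$ cannot act at type $\kappa$ unless $\kappa = \basety$ — but even then one must track where the ``payload'' of the input string is threaded) shows $s$ must in fact occur exactly once. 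I would peel off the head of $N$: if the head is $s$, then $s$ is applied to exactly $n+1$ arguments (its type $\Str_\Sigma[\kappa] = (\kappa\lin\kappa)\to\cdots\to(\kappa\lin\kappa)\to\kappa\to\kappa$ — recall $\Str_\Sigma[\kappa]$ is built with $\to$ for the $a_i$-slots but the outermost result is $\kappa\lin\kappa$ after the type substitution, so $s\;d_1\ldots d_n\;d_\epsilon$ has type $\kappa$... wait, more carefully: $s\,d_1\ldots d_n$ has type $\kappa\lin\kappa$ and then $s\,d_1\ldots d_n\,d_\epsilon : \kappa$), and then $N$ must be $o$ applied with $s\,d_1\ldots d_n\,d_\epsilon$ somewhere inside a normal term $o\,(\cdots) : \basety$; collecting everything outside the $s$-application into the context gives the term $o : \kappa \lin \basety$ applied to $s\,d_1\ldots d_n\,d_\epsilon$. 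The subterms $d_i$, $d_\epsilon$, $o$ contain no occurrence of $s$ (as $s$ is used up) and, being subterms of a normal term in a purely affine context, are purely linear with the stated types $\underline{\Gamma};\cdot \vdash d_i : \kappa \lin \kappa$, etc.

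The main obstacle I expect is the careful bookkeeping in the occurrence/typing argument that pins $s$ to appear \emph{exactly once and only in head position of that one subterm}, rather than, say, buried as an argument to some $b_j$ or $\epsilon$, or not appearing at all (which would make the output independent of the input — ruled out by the requirement that the $d_i$ at type $\kappa\lin\kappa$ be produced, since nothing else in scope inhabits $\kappa\lin\kappa$ nontrivially when $\kappa$ is a genuine function type, and a more delicate argument is needed when $\kappa = \basety$). Handling the degenerate case $\kappa = \basety$ and ensuring the statement still holds there — where the $b_j : \basety \lin \basety$ coincide in type with what $s$ expects — requires tracking the \emph{linear} occurrence discipline rather than mere typing, and this is where I would be most careful. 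Since the excerpt explicitly says this is a particular case of \cite[Lemma 5.25]{titophd} that is ``easier to prove from Proposition~\ref{prop:normalization}'', I would lean on the normalization result to avoid any confluence/uniqueness subtleties and present the occurrence analysis as the crux, remarking that uniqueness up to $\lamequiv$ of the decomposition follows from the uniqueness of $\beta$-normal forms together with $\eta$.
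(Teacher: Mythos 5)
Your overall route --- normalize via Proposition~\ref{prop:normalization}, pass to an $\eta$-long form to peel off the binders $\lam s.\lam b_1.\ldots\lam b_k.\lam\epsilon$, and inspect the normal body of type $\basety$ --- is exactly the intended one. But your pivotal intermediate claim, that $s$ must occur \emph{exactly} once, is false, and the argument you give for it is circular: there is no ``requirement that the $d_i$ be produced'' by the normal form, since $o$, the $d_i$ and $d_\epsilon$ are data you are free to choose when exhibiting a $\beta\eta$-equivalent term of the stated shape. Because the calculus is affine, $s$ may simply be discarded: $\lam s.\lam b_1.\ldots\lam b_k.\lam\epsilon.\;\epsilon$ inhabits $\Str_\Sigma[\kappa]\lin\Str_\Gamma$ with no occurrence of $s$. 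This case should not be ruled out but handled directly: the body is then a closed term of type $\basety$ over $\underline{\Gamma}$, hence $\beta\eta$-equal to some $\underline{w}$, and one takes $o=\lam x.\,\underline{w}$ (discarding $x$, which affineness permits), $d_i=\lam x.x$, and for $d_\epsilon$ any inhabitant of $\kappa$ over $\underline{\Gamma}$, e.g.\ $\lam x_1\ldots\lam x_m.\,\epsilon$ when $\kappa=\kappa_1\lin\cdots\lin\kappa_m\lin\basety$; then $o\;(s\;d_1\ldots d_n\;d_\epsilon)\to_\beta \underline{w}$ as required. Your worry about a ``delicate case $\kappa=\basety$'' dissolves once you stop trying to force $s$ to occur.

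In the case where $s$ does occur (at most once by affineness), two points you gloss over need to be made explicit. First, if $\kappa$ is an arrow type $\kappa_1\lin\cdots\lin\kappa_m\lin\basety$, then in the $\eta$-long normal body the unique occurrence of $s$ heads a spine $s\;e_1\ldots e_n\;e_\epsilon\;f_1\ldots f_m$ of type $\basety$, not an application to exactly $n+1$ arguments; the trailing $f_j$ must be absorbed into $o$. Second, the reason ``collecting everything outside the $s$-application into $o$'' works is that every variable in scope other than $s$ is first order ($b_j:\basety\lin\basety$, $\epsilon:\basety$), so the context surrounding the spine contains no $\lambda$-abstractions whatsoever: it is a word context $b_{j_1}(\cdots b_{j_r}([\,])\cdots)$. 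Hence $o:=\lam x.\,b_{j_1}(\cdots b_{j_r}(x\;f_1\ldots f_m)\cdots)$ is well typed of type $\kappa\lin\basety$ over $\underline{\Gamma}$ alone, and --- crucially --- the arguments $e_i$, $e_\epsilon$ contain no variables bound outside the spine, so they really are terms over $\underline{\Gamma}$ of types $\kappa\lin\kappa$ and $\kappa$; that they are \emph{purely affine} follows from the subformula property of normal forms, all types in sight being purely affine. With these repairs (and the zero-occurrence case above) your analysis goes through: a single $\beta$-step from $o\;(s\;d_1\ldots d_n\;d_\epsilon)$ recovers the normal form, giving the statement up to $\lamequiv$.
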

This lemma and the fact that $\mathtt{reverse}$ is definable mean that an
affine $\lampa$-definable function
$\Sigma^* \to \Gamma^*$ can, without loss of generality, be given by
a \emph{$\lampa$-transducer}, which we define as follows (see e.g.~\cite[Definition 2.6]{nguyenvanoni} or~\cite[Definition 3.22]{freeadditives} for similar
definitions).

\begin{definition}
  \label{def:normalAffineStr2Str}
A \emph{$\lampa$-transducer} with input $\Sigma^*$ and output $\Gamma^*$ is given
by the
following types and terms from the purely affine planar $\lambda$-calculus
with constants in $\underline{\Gamma}$:
\begin{itemize}
  \item an iteration type $\kappa$,
  \item for each $a \in \Sigma$, a term $d_a : \kappa \lin \kappa$ over the signature $\underline{\Gamma}$,
  \item a term $d_\epsilon : \kappa$
  \item and a term $o : \kappa \lin \basety$.
\end{itemize}
The underlying function is then defined by mapping a word $w_0\ldots w_n$
to the word corresponding to the normal form of
  $o \; (d_{w_n} \; (\ldots (d_{w_0} \; d_\epsilon) \ldots))$.
\end{definition}

\subsection{The category of planar diagrams}

We will now introduce a category of what we are going to call 
\emph{planar diagrams}. The idea is that the morphisms may be represented by
graphs with (an ordered set of) vertices labelled by polarities $p \in \{-, +\}$ and edges labelled
by words over some fixed output alphabet $\Gamma$. Also given would be a
partition of the vertices into input and outputs, and then the composition
would be represented by pasting the diagrams together and concatenating labels,
in an order prescribed by the polarities and whether the nodes involved are
inputs or outputs. One such diagram is pictured in Figure~\ref{fig:morphism}.

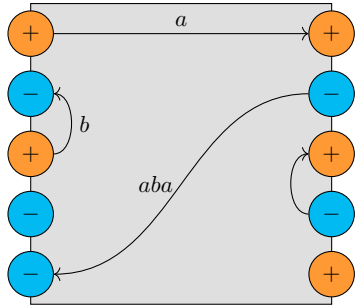
\begin{figure}[h]
  \begin{center}
    \begin{tikzpicture}
      [on grid,->, scale = 0.8, transform shape, baseline=(current bounding box.center)]

      \node (in 1) at (0,0) [plus];
      \node (in 2) [below=of in 1] [minus];
      \node (in 3) [below=of in 2] [plus];
      \node (in 4) [below=of in 3]  [minus];
      \node (in 5) [below=of in 4]  [minus];

      \node (out 1) [right=5 of in 1][plus];
      \node (out 2) [below=of out 1] [minus];
      \node (out 3) [below=of out 2] [plus];
      \node (out 4) [below=of out 3] [minus];
      \node (out 5) [below=of out 4] [plus];

      \draw (in 1) to node [above] {$a$} (out 1);
      \draw (in 3) to [out=0, in=0] node [right] {$b$} (in 2);
      \draw (out 4) to [bend right=270] (out 3);
      \draw (out 2) to [out=180, in=0] node [left] {$aba$} (in 5);

      \begin{scope}[on background layer]
        \draw[box] ([yshift=.5cm]in 1) rectangle ([yshift=-.5cm]out 5);
      \end{scope}
    \end{tikzpicture}
  \end{center}
  \caption{A geometric realization of a morphism from $+-+--$ to $+-+-+$.
  The edge directions are not part of the definition, but inferred from the
  polarity labels of the source and targets. When the label is $\emptyString$,
  we omit it from the picture.}
  \label{fig:morphism}
\end{figure}

The major restriction that we will put on the diagrams living on our category is that
they be \emph{planar}. While we will define these morphisms in a combinatorial
way for simplicity, this condition is more intuitive when interpreted geometrically.
A geometric interpretation of a diagram can be given by
writing out the nodes in order on the boundary of a bounding rectangle (filled in grey in
Figure~\ref{fig:morphism}), the inputs sitting on the left boundary and outputs
on the right boundary, and tracing out the edges within that square. A diagram
is then \emph{geometrically planar} when it is possible to do so without making the
edges cross.

On the other hand, the combinatorial definition goes as follows.

\begin{definition}
\label{def:planar-combinatorial}
$(V, <, E)$, consisting of an undirected graph $(V, E)$ ($E \subseteq [V]^2$)
and a total order $<$ over $V$, is called \emph{combinatorially planar} if for
every four vertices $a < b < c < d$ then we do not have
both edges between $a$ and $c$ and between $b$ and $d$.
\end{definition}

Checking that a combinatorial planar structure can be realized as a geometrical
planar structure is relatively straightforward. Proving that conversely 
a structure with a geometrically planar realization is combinatorially planar
can be done using the Jordan curve theorem.

While the diagrams formally do not have a direction, an intended traversal
direction is going to be induced by the label of the vertices and whether
they are in the input or output sets. More precisely
\begin{itemize}
\item if $v$ is an input vertex of polarity $+$ or an output vertex of
polarity $-$, then it is an implicit \emph{source} and
\item if $v$ is an output vertex of polarity $+$ or an input vertex of polarity
$-$, then it is an implicit \emph{target}.
\end{itemize}
In morphisms, we will restrict edges so that they contain exactly one implicit source
and target, so overall they are all orientable. This allows to define the
composition $f \circ g$ of two diagrams unambiguously.
This can be done for geometrical representations of $f$ and $g$ as follows:
\begin{enumerate}
\item paste the two diagrams together, identifying the output boundary of $g$
with the input boundary of $f$
\item take the new bounding rectangle to be the union of those for $f$ and $g$;
erase the nodes that do not belong to its boundary, as well as the edges that
dangle in its interior and loops
\item concatenate the labels along the implicit direction of the edges they are
labelling
\end{enumerate}
The way we restricted the edges so that they may be oriented makes sure that
the last step is well-defined and yields a picture where each interior edge
is unambiguously labelled by a word. This process, pictured on an example in 
Figure~\ref{fig:composition}, can be easily adapted beat-for-beat with the
combinatorial definition.
However, checking that this yields a diagram which is still planar is more
easily done geometrically.

\begin{figure}[h]
  \begin{center}
    \begin{tikzpicture}[->,scale=0.7, every node/.style={transform
          shape}, on grid, baseline=(current bounding box.center)]

      \node (in 1) at (0,0) [plus];
      \node (in 2) [below=of in 1] [minus];
      \node (in 3) [below=of in 2] [plus];

      \node (out 1) [right=2 of in 1][plus];
      \node (out 2) [below=of out 1] [plus];
      \node (out 3) [below=of out 2] [minus];
      \node (out 4) [below=of out 3] [plus];
      \node (out 5) [below=of out 4] [minus];

      \draw (in 1) to node [above] {$a$} (out 1);
      \draw (out 3) to [out=180, in=180] node [left] {$b$} (out 2);
      \draw (in 3) to [out=0, in=180] node [left] {$c$} (out  4);

      \begin{scope}[on background layer]
        \draw[box] ([yshift=.5cm]in 1) rectangle ([yshift=-.5cm]out 5);
      \end{scope}
    \end{tikzpicture}
    ;
    \begin{tikzpicture}[->,scale=0.7, every node/.style={transform
          shape}, on grid, baseline=(current bounding box.center)]

      \node (in 1) at (0,0) [plus];
      \node (in 2) [below=of in 1] [plus];
      \node (in 3) [below=of in 2] [minus];
      \node (in 4) [below=of in 3]  [plus];
      \node (in 5) [below=of in 4]  [minus];

      \node (out 1) [right=2 of in 1][plus];
      \node (out 2) [below=of out 1] [plus];
      \node (out 3) [below=of out 2] [minus];

      \draw (in 1) to node [above] {$x$} (out 1);
      \draw (in 2) to (out 2);
      \draw (in 4) to [out=0, in=0] node [left] {$z$} (in 3);
      \draw (out 3) to [out=180, in=0] node [right] {$y$} (in 5);

      \begin{scope}[on background layer]
        \draw[box] ([yshift=.5cm]out 1) rectangle ([yshift=-.5cm]in 5);
      \end{scope}
    \end{tikzpicture}
    $\qquad\longmapsto\qquad$
    \begin{tikzpicture}[->,scale=0.7, every node/.style={transform shape},on grid, baseline=(current bounding box.center)]
      \node (in 1) at (0,0) [plus];
      \node (in 2) [below=of in 1] [minus];
      \node (in 3) [below=of in 2] [plus];

      \node (mid 1) [right=2 of in 1][plus];
      \node (mid 2) [below=of mid 1] [plus];
      \node (mid 3) [below=of mid 2] [minus];
      \node (mid 4) [below=of mid 3] [plus];
      \node (mid 5) [below=of mid 4] [minus];

      \draw (in 1) to node [above] {$a$} (mid 1);
      \draw (mid 3) to [out=180, in=180] node [left] {$b$} (mid 2);
      \draw (in 3) to [out=0, in=180] node [left] {$c$} (mid  4);

      \node (out 1) [right=2 of mid 1][plus];
      \node (out 2) [below=of out 1] [plus];
      \node (out 3) [below=of out 2] [minus];

      \draw (mid 1) to node [above] {$x$}  (out 1);
      \draw (mid 2) to (out 2);
      \draw (mid 4) to [out=0, in=0] node [left] {$z$} (mid 3);
      \draw (out 3) to [out=180, in=0] node [right] {$y$} (mid 5);

      \begin{scope}[on background layer]
        \draw[box] ([yshift=.5cm]in 1) rectangle ([yshift=-.5cm]mid 5);
        \draw[box] ([yshift=.5cm]out 1) rectangle ([yshift=-.5cm]mid 5);
      \end{scope}
    \end{tikzpicture}
    $\qquad\longmapsto\qquad$
    \begin{tikzpicture}[->,scale=0.7, every node/.style={transform shape}, on grid, baseline=(current bounding box.center)]

      \node (in 1) at (0,0) [plus];
      \node (in 2) [below=of in 1] [minus];
      \node (in 3) [below=of in 2] [plus];

      \node (out 1) [right=2 of in 1][plus];
      \node (out 2) [below=of out 1] [plus];
      \node (out 3) [below=of out 2] [minus];

      \draw (in 1) to node [above] {$ax$} (out 1);
      \draw (in 3) to [out=0, in=180] node [left] {$czb$} (out 2);

      \begin{scope}[on background layer]
        \draw[box] ([yshift=.5cm]in 1) rectangle ([yshift=-.5cm]out 3);
      \end{scope}

    \end{tikzpicture}
  \end{center}
  \caption{How morphisms compose}
  \label{fig:composition}
\end{figure}
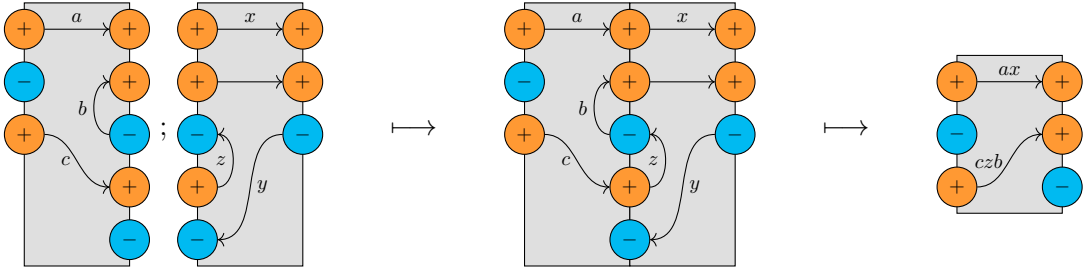

Let us summarize what is a legal diagram from a combinatorial standpoint.

\begin{definition}
  \label{def:diagram-combinatorial}
  A \emph{combinatorial planar diagram labelled by a monoid $M$} is a tuple
  $(V_{in}, V_{out}, \rho, <, E, \ell)$
  where
  \begin{itemize}[
        topsep=0.7em]
    \item $V_{in}$ and $V_{out}$ are disjoint finite sets of vertices
    \item $<$ is a total order over $V_{in} \cup V_{out}$
    \item $\rho : V_{in} \cup V_{out} \to \{+,-\}$ assigns polarities to vertices
    \item $E$ contains subsets of $V_{in} \cup V_{out}$ of size exactly two
    \item $\ell : E \to M$ assigns labels to edges
  \end{itemize}
  subject to the following restrictions, setting $V = V_{in} \cup V_{out}$:
  \begin{itemize}[
        topsep=0.7em]
    \item all vertices in $(V,E)$ must have degree at most one
    \item $v_{in} < v_{out}$ for every $v_{in} \in V_{in}$ and $v_{out} \in V_{out}$
    \item $(V, <, E)$ must be planar
    \item every edge $e \in E$ contains an implicit source as well as a target.
  \end{itemize}
\end{definition}

We can now give an official formal definition of categories of diagrams
$\catTransition_\Sigma$ where $\Sigma$ is going to be the output alphabet.
To make the monoidal structure on $\catTransition_\Sigma$
strict and our lives easier,
we will take objects to be words over $\{+,-\}$ rather than labelled
sets of inputs and outputs, and determine the vertices of the diagrams by
positions in the input and output objects.

\begin{definition}
\label{def:diagram-cat}
Let $\Sigma$ be a finite alphabet. The
category of planar diagrams over $\Sigma$, $\catTransition_\Sigma$, is defined
as follows.
\begin{itemize}
\item \textbf{Objects} are finite words in $\str{\{+,-\}}$.
\item \textbf{Morphisms}, for $A = a_1 \dots a_n$ and $B = b_1 \dots b_m$
  a morphism $A \to B$ is a planar combinatorial diagram labelled by $\Sigma^*$
  where:
\begin{itemize}
\item $V_{in} = \{(-1, 1), \dots, (-1, n)\}$
\item $V_{out} = \{(1, 1), \dots, (1, m)\}$
\item $<$ is defined by setting $(i, q) < (j, r)$ if and only if $(i, iq) <_{\mathrm{lex}} (j, jr)$ in
  the lexicographic order
\end{itemize}
\item \textbf{Identities} are given by diagrams where all labels are $\emptyString$
  and containing all possible edges $\{(-1,k),(1,k)\}$
\item \textbf{Composition} $h = f ; g$ is given by identifying the output
vertices $(1,k)$ of $g$ with the input vertices $(-1, k)$ of $f$ and composing
the combinatorial diagrams as explained above.
\end{itemize}
\end{definition}

The free monoid structure on objects $\str{\{+,-\}}$ extends to a
strict monoidal structure on $\catTransition_\Sigma$, i.e., tensoring of
objects is concatenation and the unit $\unit$ is $\emptyString$.
Over morphisms, tensoring can be pictured as putting two diagrams on top of
each other as in Figure~\ref{fig:monoidal-product}.
Note that the planarity of our diagrams means that this tensor cannot be equipped
with a symmetric structure and that $\unit$ is not a terminal object.

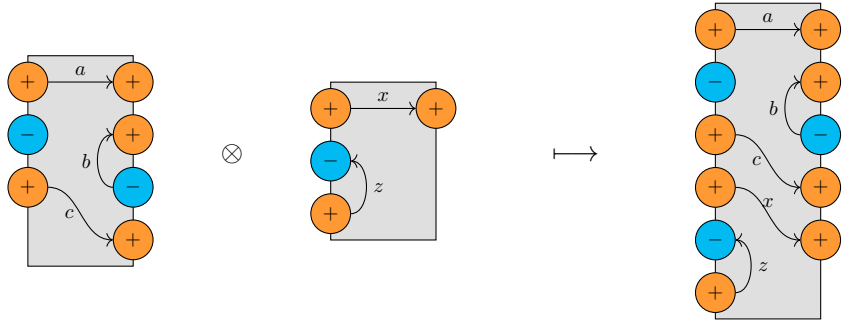
\begin{figure}[h]
  \begin{center}
    \begin{tikzpicture}[->,scale=0.7, every node/.style={transform
          shape}, on grid, baseline=(current bounding box.center)]

      \node (in 1) at (0,0) [plus];
      \node (in 2) [below=of in 1] [minus];
      \node (in 3) [below=of in 2] [plus];

      \node (out 1) [right=2 of in 1][plus];
      \node (out 2) [below=of out 1] [plus];
      \node (out 3) [below=of out 2] [minus];
      \node (out 4) [below=of out 3] [plus];

      \draw (in 1) to node [above] {$a$} (out 1);
      \draw (out 3) to [out=180, in=180] node [left] {$b$} (out 2);
      \draw (in 3) to [out=0, in=180] node [left] {$c$} (out  4);

      \begin{scope}[on background layer]
        \draw[box] ([yshift=.5cm]in 1) rectangle ([yshift=-.5cm]out 4);
      \end{scope}
    \end{tikzpicture}
    $\qquad\tensor\qquad$
    \begin{tikzpicture}[->,scale=0.7, every node/.style={transform
          shape}, on grid, baseline=(current bounding box.center)]

      \node (in 1) at (0,0) [plus];
      \node (in 2) [below=of in 1] [minus];
      \node (in 3) [below=of in 2]  [plus];

      \node (out 1) [right=2 of in 1][plus];

      \draw (in 1) to node [above] {$x$} (out 1);
      \draw (in 3) to [out=0, in=0] node [right] {$z$} (in 2);

      \begin{scope}[on background layer]
        \draw[box] ([yshift=.5cm]out 1) rectangle ([yshift=-.5cm]in 3);
      \end{scope}
    \end{tikzpicture}
    $\qquad\quad\longmapsto\quad\qquad$
    \begin{tikzpicture}[->,scale=0.7, every node/.style={transform shape}, on grid, baseline=(current bounding box.center)]

      \node (in 1) at (0,0) [plus];
      \node (in 2) [below=of in 1] [minus];
      \node (in 3) [below=of in 2] [plus];
      \node (in 4) [below=of in 3] [plus];
      \node (in 5) [below=of in 4] [minus];
      \node (in 6) [below=of in 5] [plus];

      \node (out 1) [right=2 of in 1][plus];
      \node (out 2) [below=of out 1] [plus];
      \node (out 3) [below=of out 2] [minus];
      \node (out 4) [below=of out 3] [plus];
      \node (out 5) [below=of out 4] [plus];

      \draw (in 1) to node [above] {$a$} (out 1);
      \draw (out 3) to [out=180, in=180] node [left] {$b$} (out 2);
      \draw (in 3) to [out=0, in=180] node [left] {$c$} (out  4);

      \draw (in 4) to [out=0, in=180] node [above] {$x$} (out 5);
      \draw (in 6) to [out=0, in=0] node [right] {$z$} (in 5);

      \begin{scope}[on background layer]
        \draw[box] ([yshift=.5cm]out 1) rectangle ([yshift=-.5cm]in 6);
      \end{scope}

    \end{tikzpicture}
  \end{center}
  \caption{The monoidal product of two morphisms}
  \label{fig:monoidal-product}
\end{figure}

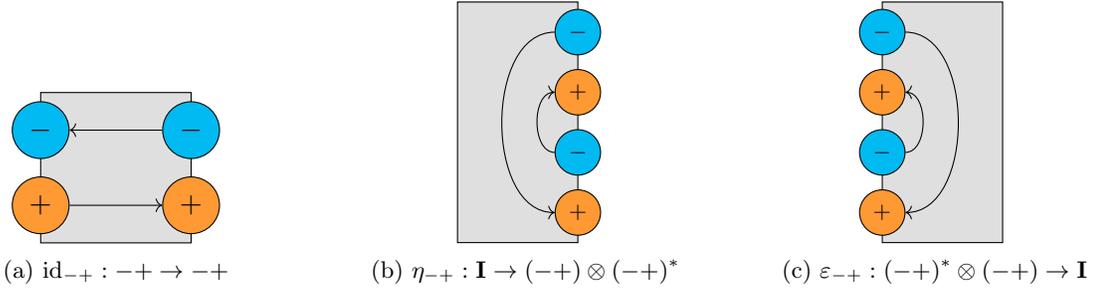
\begin{figure}[h]
  \centering
  \begin{subfigure}{0.3\linewidth}
    \centering
    \begin{tikzpicture}
      [on grid,->]

      \node (in 1) at (0,0) [minus];
      \node (in 2) [below=of in 1] [plus];

      \node (out 1) [right=2 of in 1][minus];
      \node (out 2) [below=of out 1] [plus];

      \draw (in 2) to (out 2);
      \draw (out 1) to (in 1);

      \begin{scope}[on background layer]
        \draw[box] ([yshift=.5cm]in 1) rectangle ([yshift=-.5cm]out 2);
      \end{scope}
    \end{tikzpicture}
    \caption{$\id_{-+} : -+ \to -+$}
  \end{subfigure}
  \begin{subfigure}{0.3\linewidth}
    \centering
    \begin{tikzpicture}
      [on grid,->, scale = 0.8, transform shape, baseline=(current bounding box.center)]
      \node (in first) at (0,0) [draw opacity=0] {};
      \node (in last) [below=3 of in first] [draw opacity=0] {};

      \node (out 1) [right=2 of in first][minus];
      \node (out 2) [below=of out 1] [plus];
      \node (out 3) [below=of out 2] [minus];
      \node (out 4) [below=of out 3] [plus];

      \draw (out 1) to [bend left=270] (out 4);
      \draw (out 3) to [bend left=90] (out 2);

      \begin{scope}[on background layer]
        \draw[box] ([yshift=.5cm]in 1) rectangle ([yshift=-.5cm]out 4);
      \end{scope}
    \end{tikzpicture}
    \caption{$\eta_{-+} : \lI \to (-+) \tensor \dual{(-+)}$}
  \end{subfigure}
  \begin{subfigure}{0.3\linewidth}
    \centering
    \begin{tikzpicture}
      [on grid,->, scale = 0.8, transform shape, baseline=(current bounding box.center)]
      \node (in 1) at (0, 0) [minus];
      \node (in 2) [below=of in 1] [plus];
      \node (in 3) [below=of in 2] [minus];
      \node (in 4) [below=of in 3] [plus];

      \node (out 1) [right=2 of in first] [draw opacity=0] {};
      \node (out 2) [below=3 of out 1] [draw opacity=0] {};

      \draw (in 1) to [bend right=270] (in 4);
      \draw (in 3) to [bend right=90] (in 2);

      \begin{scope}[on background layer]
        \draw[box] ([yshift=.5cm]in 1) rectangle ([yshift=-.5cm]out 2);
      \end{scope}
    \end{tikzpicture}
    \caption{$\varepsilon_{-+} : \dual{(-+)} \tensor (-+) \to \lI$}
  \end{subfigure}
  \caption{Identity, cup and cap for the object $-+$}
  \label{fig:id-caps-cups}
\end{figure}

Our category also carries a strict pivotal structure (Definition~\ref{def:pivotal}).
The dual $w^*$ of an object $w$ is obtained by reversing it and flipping the polarities. For instance,
$(+--)^*$ is $++-$. Going by this definition, note we also have $(w^*)^* = w$.
We also have natural transformations
$\eta_A : \unit \to A \tensor A^*$ and $\varepsilon_A : A^* \tensor A \to \unit$
that we picture in Figure~\ref{fig:id-caps-cups}.
They satisfy the yanking equations, which gives us in particular
the closed structure by setting $A \lin B = B \tensor A^*$,
$\ev_{A,B} = \id_B \tensor \varepsilon_A$ and $\Lambda_{A,B,C}(f) = (\id_A \tensor \eta_B) ; (f \tensor \id_{B^*})$ as per Lemma~\ref{lem:pivotal-closed}.

Finally, observe that we may define a natural order on combinatorial diagrams
sharing the same vertices. Given two such diagrams $d$ and $d'$ with respective
edge sets $E_d$ and $E_d'$, we say that $d \le d'$ whenever $E_d \subseteq E_d'$
and their edge labellings coincide over $E_d$.
This gives an order on hom-sets of $\catTransition_{\Sigma}$ where composition
and tensoring are easily checked to be both monotone. Together with the
observation that we have cups and caps that satisfy the yanking equations, we
thus have.

\begin{lemma}
$\catTransition_\Sigma$ equipped with the concatenating tensor and inclusion
of labelled edges is a strict monoidal-closed poset-enriched category.
\end{lemma}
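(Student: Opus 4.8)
The plan is to verify the four ingredients in turn --- that $\catTransition_\Sigma$ is a category, that concatenation makes it strict monoidal, that inclusion of labelled edges $\catname{Poset}$-enriches it compatibly with both structures, and that it is closed --- reusing the geometric discussion that precedes the statement. Combinatorially, a morphism $A \to B$ is just a partial matching on the positions of $A \sqcup B$ satisfying the orientation and planarity constraints of Definition~\ref{def:diagram-combinatorial} together with a $\Sigma^*$-labelling of its edges, and composition is the path-tracing of Figure~\ref{fig:composition}. First I would check that this operation produces a legal morphism: boundary vertices keep degree $\le 1$ since each is the endpoint of at most one traced path; every surviving edge is correctly oriented, hence carries an unambiguous concatenated label, because each internal vertex plays opposite source/target roles for the two diagrams being glued (it is an output vertex on one side and an input vertex on the other, and the source/target dichotomy of Definition~\ref{def:diagram-combinatorial} swaps under that change), so a traced path alternates source and target at every internal step and joins two consistently oriented ends; and the result is planar. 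This last point I would not do by hand: I would pass to geometric realizations of $f$ and $g$, stack them, note the traced curves in the enlarged rectangle stay pairwise non-crossing, and invoke the implication ``geometrically planar $\Rightarrow$ combinatorially planar'' already granted after Definition~\ref{def:planar-combinatorial}. Associativity and the identity laws are then the usual facts about path-tracing: $(f;g);h$ and $f;(g;h)$ trace out exactly the maximal alternating paths through the two intermediate layers with the same labels, and gluing the straight-through matching only renames the boundary, creating no new paths.

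For the strict monoidal structure I would take $\tensor$ on objects to be word concatenation, a monoid with unit $\emptyString$, and on morphisms the stacking of Figure~\ref{fig:monoidal-product}, so that the edge set of $f \tensor g$ is the edge set of $f$ together with a shifted copy of that of $g$, labels unchanged. Strict associativity and the unit laws for $\tensor$ are immediate because the vertices of every diagram are canonically indexed by positions. Functoriality of $\tensor$ --- that $\id_A \tensor \id_{A'} = \id_{AA'}$ and the interchange law $(f \tensor f') \circ (g \tensor g') = (f \circ g) \tensor (f' \circ g')$ --- holds because a stacked diagram has no edge joining its two layers, so path-tracing through $g \tensor g'$ and then $f \tensor f'$ stays within each layer. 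Hence $\catTransition_\Sigma$ is strict monoidal.

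For the $\catname{Poset}$-enrichment I would first observe that $d \le d'$ (i.e. $E_d \subseteq E_{d'}$ with matching labels on $E_d$) is a partial order on each hom-set, being labelled-subset inclusion, and then check monotonicity of $\circ$ and $\tensor$. Monotonicity of $\tensor$ is clear from the stacked edge set. For $\circ$, suppose $g \le g'$ and $f \le f'$; a composite edge of $f \circ g$ joining boundary vertices $u,v$ corresponds to a traced path $u = x_0 - x_1 - \dots - x_k = v$ all of whose edges lie in $g'$ and in $f'$. Since $u$ is a boundary vertex it already has degree $1$ in the relevant one of $g$, $f$, and passing to $g'$, $f'$ cannot increase that degree, so the path still starts --- and symmetrically ends --- there and is still maximal; it therefore survives as a composite edge of $f' \circ g'$ with the very same concatenated label, while $f' \circ g'$ may only have additional composite edges. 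Thus $f \circ g \le f' \circ g'$.

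Finally, for closure I would use the strict pivotal structure exhibited just before the statement: duals by reversing and flipping polarities, so $(w^*)^* = w$ strictly and $i_A = \id$, with cups and caps $\eta_A, \varepsilon_A$ as in Figure~\ref{fig:id-caps-cups}, whose yanking equations (Definition~\ref{def:pivotal}) one reads off by tracing the two zig-zags in those pictures. Lemma~\ref{lem:pivotal-closed} then gives that $\catTransition_\Sigma$ is autonomous and closed with $A \lin B = B \tensor A^*$, $\ev_{A,B} = \id_B \tensor \varepsilon_A$ and $\Lambda_{A,B,C}(f) = (\id_A \tensor \eta_B) ; (f \tensor \id_{B^*})$; it remains only to see the adjunction is enriched, i.e. that $\Lambda_{A,B,C}$ is a monotone bijection. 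But $\Lambda$ and its inverse $\Lambda^{-1}(g) = (g \tensor \id_B) ; (\id_C \tensor \varepsilon_B)$ are obtained from their argument by tensoring and composing with fixed morphisms, operations already shown to be monotone, so both are monotone and $\Lambda$ is an order-isomorphism of hom-posets. I would flag the well-definedness of composition --- and inside it, precisely the preservation of planarity --- as the only genuinely non-routine step; a purely combinatorial argument is possible by induction on the number of internal vertices, maintaining the ``no two crossing chords $ac$, $bd$'' invariant through each bounce, but it is more painful than the geometric route via the Jordan curve theorem that the text already takes for granted, so that is the route I would follow.
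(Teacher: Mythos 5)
Your proposal is correct and follows essentially the same route as the paper, which simply collects the preceding observations: strict monoidal structure from concatenation and stacking, poset-enrichment by labelled-edge inclusion with monotone $\circ$ and $\tensor$, and closure via the strict pivotal structure (cups, caps, yanking) through Lemma~\ref{lem:pivotal-closed}. You merely spell out details the paper leaves implicit (well-definedness and planarity of composites, interchange, monotonicity of $\Lambda$), which is fine.
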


Finally, we note that, for any set of vertices, the bottom element in this
order we have defined over diagrams is given by the graph with no edges.
Tensoring bottom elements yield bottom elements and $\id_\unit$ is the bottom
element of $\Hom{\catTransition_\Sigma}{\unit}{\unit}$.

\subsection{Two-way planar transducers}

Following Colcombet and Petrişan~\cite{colcombetpetrisan}, we formally define our
notion of \emph{two-way planar reversible transducers} (2PRFTs) as being functors whose
domain $\catAutomata{\Sigma}$ is category whose morphisms represent infixes of
words. In our situation it will mostly have the advantage of concision and
making the relationship between 2PRFTs and $\catTransition$ obvious.

\begin{definition}
  \label{def:catAutomata}
    For any finite alphabet $\Sigma$, there is a three object category
    \emph{$\catAutomata{\Sigma}$} generated by the following finite graph,
    where there is one morphism for each $a\in \Sigma$.
    \begin{center}
      \begin{tikzcd}
        \sIn \arrow[r, "\stringStart"] & \sState \arrow[loop above, "a"] \arrow[r, "\stringEnd"] & \sOut
      \end{tikzcd}
    \end{center}
    Morphisms $\sState \to \sState$ are identified with words of $\Sigma^*$
    by writing $au$ for $a ; u$ and $\emptyString$ for $\id_\sState$ (note that
    the composition is left-to-right).
    For any category $\cC$ and objects $I$ and $O$ of $\cC$, define a
  \emph{$(\cC, I, O)$-automaton with input alphabet $\Sigma$} to be a functor $\cA : \catAutomata{\Sigma} \to \cC$
  with $\cA(\sIn) = I$ and $\cA(\sOut) = O$.
  Given such an automaton $\cA$, its semantics is the map $\Sigma^* \to \Hom{\cC}{I}{O}$
  given by $w \mapsto \cA(\stringStart) ; \cA(w) ; \cA(\stringEnd)$.
\end{definition}

In this framework, we can for instance define deterministic finite automata
as $(\FinSet, 1, 2)$-functors and nondeterministic ones as $(\FinRel, 1, 1)$-functors,
and check that the semantics computes the languages as we expect it.
As the category $\catTransition_{\Sigma}$ corresponds to transition profiles
as studied in \cite{planartrans}, we will use that to define 2PRFTs in a completely
equivalent way. In that case, we will pick $I$ and $O$ such that
$\Hom{\catTransition_\Sigma}{I}{O} \cong \Sigma^*_\bot$, where $\Sigma^*_\bot$
is the disjoint union of $\Sigma^*$ with a singleton containing a $\bot$ element;
this is required because we will obtain this function by reading off the label of
a specific edge of a morphism that may not always exist.

\begin{example}
  \label{ex:2RFT}
Let us build a $(\catTransition_{\{0,1,2\}}, +, +)$-automaton with input alphabet $\{0,1,2\}$ that
pads any string in $\{0,1,2\}^*$ to ensure that every 2 is preceded by a 1 in
the output. First, here is a standard automata-theoretic picture of such a device and its transition table:\\
 {\small   \begin{multicols}{2}
 \begin{center}
   \begin{tikzpicture}[scale=0.5, auto]
      \node (start) [state, initial] {$\forward{q_0}$};
      \node (main) [state, below right= 1 and 2 of start] {$\forward{q_1}$};
      \node (end) [state, below left= 1 and 2 of main, accepting] {$\forward{q_2}$};
      \node (backup) [state, above right= 1 and 2 of main] {$\backward{q_3}$};
      \node (eat2) [state, below right= 1 and 2 of main] {$\forward{q_4}$};
      \path [-stealth, thick]
      (start) edge node [swap] {$\stringStart \ioSeparator \emptyString$} (main)
      (main) edge node  {$\stringEnd \ioSeparator \emptyString$} (end)
      edge [loop above] node {$
        \begin{aligned}
          0 &\ioSeparator 0 \\
          1 &\ioSeparator 1
        \end{aligned}
        $} ()
      edge node [swap] {$2 \ioSeparator \epsilon$} (backup)
      (backup) edge node {$
        \begin{aligned}
          \stringStart,0,2 &\ioSeparator 1 \\
          1 &\ioSeparator \epsilon
        \end{aligned}
        $} (eat2)
      (eat2) edge node  {$2 \ioSeparator 2$} (main);
    \end{tikzpicture}
\end{center}
\columnbreak
  \[ 
  \begin{array}{c|ccccc}
    & \stringStart & \stringEnd & 0 & 1 & 2 \\
    \hline
    \forward{q_0} & \forward{q_1} \ioSeparator \emptyString& & & & \\
    \forward{q_1} & & \forward{q_2} \ioSeparator \emptyString & \forward{q_1} \ioSeparator 0 & \forward{q_1} \ioSeparator 1 & \backward{q_3} \ioSeparator \emptyString\\
    \forward{q_2} & & & & & \\
    \backward{q_3} & \forward{q_4} \ioSeparator 1 & & \forward{q_4} \ioSeparator 1 & \forward{q_4} \ioSeparator \emptyString & \forward{q_4} \ioSeparator 1\\
    \forward{q_4} & & & & & \forward{q_1} \ioSeparator 2 \\
  \end{array}
\]
\end{multicols}}

  Using the ordering given by the subscripts and assigning $+$ to the
  forward vertices, i.e., $\forward{q_i}$, and $-$ to the backward
  vertices, i.e., $\backward{q_j}$, we obtain the word
  $F(\mathrm{states}) = +++-+$. For each letter $a \in \Sigma\sqcup\{\stringStart,\stringEnd\}$
  we assign the morphism $F(a)$ of the functor by reading it off the table.

  \begin{center}
    \begin{tikzpicture}[->,scale=0.8, every node/.style={transform shape}]
      \draw[thick,-] (2,0.5) -- (12,0.5) -- (12,0) -- (2,0) -- (2,0.5);
      \foreach \col in {1,...,6} {
        \draw[thick,-] (2*\col,0) -- (2*\col,0.5);
      }
      \node at (3,0.25) {$F(\stringStart)$};
      \node at (5,0.25) {$F(0)$};
      \node at (7,0.25) {$F(1)$};
      \node at (9,0.25) {$F(2)$};
      \node at (11,0.25) {$F(\stringEnd)$};

      \node (in) at (2,4.2) [plus];
      \foreach \col in {2,...,5} {
        \node (q0\col) at (2*\col,4.2) [plus];
        \node (q1\col) at (2*\col,3.4) [plus];
        \node (q2\col) at (2*\col,2.6) [plus];
        \node (q3\col) at (2*\col,1.8) [minus];
        \node (q4\col)  at (2*\col,1) [plus];
      }
      \node (out) at (2*6,4.2) [plus];

      \draw[->] (in.east) to (q12.west) ;
      \draw[->] (q32.west) to [bend right = 45] node [left] {$1$} (q42.west);
      \draw[->] (q12.east) to node [above] {$0$} (q13.west) ;
      \draw[->] (q33.west) to [bend right = 45] node [left] {$1$} (q43.west);
      \draw[->] (q13.east) to node [above] {$1$} (q14.west) ;
      \draw[->] (q34.west) to [bend right = 45] (q44.west);
      \draw[->] (q14.east) to [bend left = 45] (q34.east) ;
      \draw[->] (q35.west) to [bend right = 45] node [left] {$1$} (q45.west);
      \draw[->] (q44.east) to node [right] {$2$} (q15.west);
      \draw[->] (q15.east) to (out.west) ;

      \begin{scope}[on background layer]
        \draw[box] ([yshift=.5cm]in) rectangle ([yshift=-.5cm]q42);
        \draw[box] ([yshift=.5cm]q02) rectangle ([yshift=-.5cm]q43);
        \draw[box] ([yshift=.5cm]q03) rectangle ([yshift=-.5cm]q44);
        \draw[box] ([yshift=.5cm]q04) rectangle ([yshift=-.5cm]q45);
        \draw[box] ([yshift=.5cm]out) rectangle ([yshift=-.5cm]q45);
      \end{scope}
    \end{tikzpicture}
  \end{center}
\end{example}

\begin{definition}
  \label{def:2PRFT}
  A \emph{two-way planar reversible transducer (2PRFT)} $\cT$ with input
  alphabet $\Sigma$ and output alphabet $\Gamma$ is a
  $(\catTransition_\Gamma, \emptyString, +-)$-automaton with input alphabet $\Sigma$.

  Writing $\Gamma^*_\bot$ for the disjoint union of $\Gamma^*$ with a singleton
  $\{\bot\}$ containing a designated $\bot$ element, the semantics of
  such a 2PRFT $\cT$ induces a function
    \begin{center}
      \begin{tikzcd}[column sep=5cm]
        \Sigma^* \arrow{r}{\text{semantics of $\cT$}} & \Hom{\catTransition_\Gamma}{\emptyString}{+-}
        \arrow{r}{\text{read off the label}}[swap]{\text{($\bot$ if there is no edge)}} & \Gamma^*_\bot
      \end{tikzcd}
    \end{center}
\end{definition}

Note that our choice of $\emptyString$ and $+-$
means that by convention, both ``initial'' and ``final'' states must occur
before the initial and after the final reading of $\triangleleft$, while the
convention of~\cite[Definition 2.1]{planartrans} and in Example~\ref{ex:2RFT} is slightly different for the
initial state. In that version, it should start by reading $\triangleright$,
making the 2PRFTs of~\cite{planartrans} isomorphic to
$(\catTransition_{\Gamma}, +, +)$-automata
rather than $(\catTransition_{\Gamma}, \emptyString, +-)$-automata.
But it is not hard to see that both options induce the same class of
string-to-string functions. It will turn out that Definition~\ref{def:2PRFT} matches
much more closely $\lambda$-transducers, so we favor it out of convenience.

\section{Equivalence between planar transducers and $\lampa$ for strings}
\label{sec:main}

Now that we have introduced properly our two classes of string-to-string
functions, affine $\lampa$-definable functions and first-order transductions,
as well as two formalisms that define them, $\lampa$-transducers and 2PRFTs,
we will now embark on the proof that they are equivalent.

\mainplanarstring*

To prove that affine $\lampa$-definable functions are first-order transduction, we
use the fact that the former class correspond to $\lampa$-transductions and
then define a map from $\lampa$-transductions to 2PRFTs that preserves the
semantics. To do so, we define an interpretation of purely affine
$\lampa$-terms (with duplicable free variables in $\underline{\Gamma}$)
in the category $\catTransition_\Gamma$.
One difficulty is that $\catTransition_\Gamma$ is not \emph{affine} monoidal
closed, that is, $\unit$ is not a terminal object. So instead of terminal
maps we will use $\bot_A \in \Hom{\catTransition_\Gamma}{A}{\unit}$ and
establish that $\beta$-reductions correspond to inequalities in $\catTransition_\Gamma$
in Subsection~\ref{subsec:interplambda}.
We will then conclude in Subsection~\ref{subsec:lambda22PRFT}.
Proving the converse, which will amount to a coding exercise and a reference to
\cite{aperiodic} once the right characterization of first-order transductions
as compositions of more basic functions is recalled, will be done in
Subsection~\ref{subsec:fotrans2lambda}.

\subsection{Interpreting $\lampa$}
\label{subsec:interplambda}

All results of this subsection hold for any strict
monoidal-closed poset-enriched category $\cC$ with a family of least
elements $\bot_X \in \Hom{\cC}{X}{\unit}$ stable under $\tensor$ and with
$\bot_\unit = \id_\unit$, provided we are given an
object $\interp{\basety}$ of $\cC$ and, for every constant $x : \tau$ in
$\underline{\Gamma}$ a suitable interpretation $\interp{x} : \unit \to \interp{\tau}$,
where $\interp{\tau}$ is extended inductively over all types by setting for
$\interp{\tau \lolly \sigma}$ a chosen internal hom $\interp{\tau} \lolly \interp{\sigma}$.
This interpretation also extends to contexts by tensoring as usual by setting
$\interp{\cdot} = \unit$ and $\interp{\Delta, x : \tau} = \interp{\Delta} \tensor \interp{\tau}$.
The extension of $\interp{-}$ over all purely affine $\lampa$ typing derivations\footnote{It can
actually be shown that the interpretation of a legal typing derivation
$\underline{\Gamma}; \Delta \vdash t : \tau$ only depends on the conclusion.
But we won't need to make use of that fact.}
is then given in Figure~\ref{fig:interp-mccbot}.
One thing to note is that the overall interpretation $\interp{t}$ of a term $t$ can be carried out in
polynomial time in the size of $t$ because type-checking is polynomial-time and
composition in $\catTransition_\Gamma$ can be performed in logarithmic space.

\begin{figure}
\[
\begin{array}{ccc}
  \inferrule*{\text{$x$ a variable of $\underline{\Gamma}$}}
            {\judgement{\underline{\Gamma}}{\Delta}{x}{\tau}}
  & \longmapsto &  \interp{x} \circ \bot_{\interp{\Delta}}
  : \interp{\Delta} \to \interp{\tau}
  \\ \\
\inferrule*{ }
            {\judgement{\underline{\Gamma}}{\Delta, \hasType{x}{\tau}, \Delta'}{x}{\tau}}
  & \longmapsto &
  \bot_{\interp{\Delta}}
  \tensor \id_{\interp{\tau}}
  \tensor
  \bot_{\interp{\Delta'}} : \interp{\Delta} \tensor \interp{\tau} \tensor \interp{\Delta'} \to \interp{\tau}
  \\
  \\
\inferrule*{\judgement{\underline{\Gamma}}{\Delta, \hasType{x}{\tau}}{t}{\sigma}}
           {\judgement{\underline{\Gamma}}{\Delta}{\lambda{}x.t}{\tau\lolly\sigma}}
  & \longmapsto &
\inferrule*{\interp{t} : \interp{\Delta} \tensor \interp{\tau} \to \interp{\sigma}}
  {
  \Lambda_{\interp{\Delta},\interp{\tau},\interp{\sigma}
  }(\interp{t})
  : \interp{\Delta} \to \interp{\tau} \lolly \interp{\sigma}}
\\\\
\inferrule*{\judgement{\underline{\Gamma}}{\Delta}{t}{\tau\lolly\sigma} \and
            \judgement{\underline{\Gamma}}{\Delta^{\prime}}{u}{\tau}}
            {\judgement{\underline{\Gamma}}{\Delta, \Delta^{\prime}}{t \; u}{\sigma}}
  & \longmapsto &
\inferrule*{\interp{t} : \interp{\Delta} \to \interp{\tau} \lolly \interp{\sigma}
   \and
   \interp{u} : \interp{\Delta'} \to \interp{\tau}
  }
  {
    \ev_{\interp{\tau},\interp{\sigma}} \circ (\interp{t} \tensor \interp{u})
  : \interp{\Delta} \tensor \interp{\Delta'} \to \interp{\sigma}}
\end{array}
\]
\caption{Interpretation of purely affine $\lampa$-terms over $\underline{\Gamma}$ (parameterized by $\interp{\basety}$ and $\interp{x} : \unit
  \to \interp{\tau}$ for $x
  : \tau$ occurring in $\underline{\Gamma}$).}
\label{fig:interp-mccbot}
\end{figure}

While we will not have that $t =_{\beta\eta} u$ implies
$\interp{t} = \interp{u}$, it will be the case that:
\begin{itemize}
  \item $\eta$-equivalences $t =_\eta u$ will be mapped to equalities of morphisms $\interp{t} = \interp{u}$
  \item $\beta$-reductions $t \to_\beta u$ will be mapped to inequalities $\interp{t} \ge \interp{u}$
\end{itemize}
so that, in particular, a normal form $t_\NF$ of $t$ will always satisfy
$\interp{t_\NF} \le \interp{t}$.
Let us now establish that, beginning with $\eta$-equivalence.

\begin{lemma}
\label{lem:interp-eta-eq}
When $\underline{\Gamma}; \Delta \vdash t : \tau \lolly \sigma$, we have $\interp{\lambda x. \; t \; x} = \interp{t}$.
\end{lemma}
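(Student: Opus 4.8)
The plan is to unfold both $\interp{\lambda x.\ t\ x}$ and $\interp{t}$ using the clauses of Figure~\ref{fig:interp-mccbot} and then recognise the resulting morphism as $\Lambda$ applied to $\Lambda^{-1}$ applied to $\interp t$, i.e.\ to exploit that the transpose and its inverse are mutually inverse. Throughout I abbreviate $\Lambda = \Lambda_{\interp\Delta,\interp\tau,\interp\sigma}$.

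\emph{Unfolding the left-hand side.} Since the $\eta$-clause only applies when $x$ does not occur in $t$, the derivation of $\underline{\Gamma};\Delta, x:\tau \vdash t\ x : \sigma$ is the application rule splitting the context $\Delta, x:\tau$ into $\Delta$ (feeding the given derivation of $t$) and the singleton $x:\tau$ (feeding the axiom $\underline{\Gamma}; x:\tau\vdash x:\tau$); in particular the subderivation for $t$ is literally the one in the hypothesis, so no weakening is involved. Reading the variable clause of Figure~\ref{fig:interp-mccbot} with both side-contexts empty and using $\interp{\cdot}=\unit$, $\bot_\unit=\id_\unit$ and strictness of $\tensor$, we get $\interp x = \id_{\interp\tau}$. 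The application clause then gives $\interp{t\ x} = \ev_{\interp\tau,\interp\sigma}\circ(\interp t\tensor\id_{\interp\tau})$, and the abstraction clause gives $\interp{\lambda x.\ t\ x} = \Lambda\bigl(\ev_{\interp\tau,\interp\sigma}\circ(\interp t\tensor\id_{\interp\tau})\bigr)$.

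\emph{Identifying the inverse transpose.} It now suffices to show $\ev_{\interp\tau,\interp\sigma}\circ(\interp t\tensor\id_{\interp\tau}) = \Lambda^{-1}(\interp t)$, after which $\interp{\lambda x.\ t\ x}=\Lambda(\Lambda^{-1}(\interp t))=\interp t$. Recall $\ev_{\interp\tau,\interp\sigma}$ is by definition $\Lambda^{-1}(\id)$ at the object $\interp\tau\lolly\interp\sigma$. Naturality of the isomorphism $\Lambda$ in its first argument — equivalently, that $\Lambda^{-1}_{A',B,C}(k\circ h)=\Lambda^{-1}_{A,B,C}(k)\circ(h\tensor\id_B)$ — instantiated at $k=\id_{\interp\tau\lolly\interp\sigma}$ and $h=\interp t:\interp\Delta\to\interp\tau\lolly\interp\sigma$ yields precisely this equality.

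There is no genuine obstacle here: the content is entirely bookkeeping. The only points that require a moment's care are keeping the reverse-composition convention $f;g=g\circ f$ straight when comparing with Lemma~\ref{lem:pivotal-closed}, and invoking naturality of the adjunction transpose in the correct (first) variable. Alternatively, in the concrete category $\catTransition_\Gamma$ one could substitute the explicit formulas for $\Lambda$ and $\Lambda^{-1}$ from Lemma~\ref{lem:pivotal-closed} and discharge the identity $\ev\circ(\interp t\tensor\id)=\Lambda^{-1}(\interp t)$ directly by the yanking equations, but the abstract argument above applies verbatim to any $\cC$ satisfying the hypotheses fixed at the start of the subsection.
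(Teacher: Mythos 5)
Your proof is correct and follows essentially the same route as the paper: unfold the three interpretation clauses to get $\interp{\lambda x.\,t\,x}=\Lambda\bigl(\ev\circ(\interp{t}\tensor\id)\bigr)$ and then conclude by the universal property of the internal hom, which you merely spell out in more detail via the identity $\ev\circ(\interp{t}\tensor\id)=\Lambda^{-1}(\interp{t})$ and $\Lambda\circ\Lambda^{-1}=\id$.
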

\begin{proof}
By definition $\interp{\lambda x. \; f \; x} = \Lambda_{\interp{\Delta},\interp{\tau}, \interp{\sigma}}(\ev_{\interp{\tau},\interp{\sigma} \circ
(\interp{f} \tensor \id_{\interp{\tau}})})$, and the latter is equal to
$\interp{t}$ by using the universal property of the internal hom.
\end{proof}

\begin{corollary}
\label{cor:interp-eta-eq}
If we have $t =_\eta u$, then we have that $\interp{t} = \interp{u}$.
\end{corollary}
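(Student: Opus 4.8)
The plan is to deduce this from Lemma~\ref{lem:interp-eta-eq} by exploiting that $=_\eta$ is, by definition, the \emph{least} congruence containing the clauses $t =_\eta \lambda x.\, t\, x$. Concretely, I would introduce the relation $\approx$ on well-typed purely affine terms over $\underline{\Gamma}$ given by $t \approx u$ iff $t$ and $u$ admit the same typing and $\interp{t} = \interp{u}$, and then show that $\approx$ is a congruence containing all those clauses. Minimality of $=_\eta$ then yields ${=_\eta} \subseteq {\approx}$, which is exactly the claim.

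That $\approx$ is an equivalence relation is immediate, since equality of morphisms is. That it contains the basic $\eta$-clauses is precisely Lemma~\ref{lem:interp-eta-eq}: its side condition (that $t$ have a function type and $x$ not occur in $t$) coincides with the one attached to the $\eta$-clause, and since we work in the purely affine fragment every function type is of the form $\tau \lolly \sigma$, so the lemma always applies. The only remaining point is compatibility with the term constructors, which one reads straight off Figure~\ref{fig:interp-mccbot}: the interpretation is defined by structural recursion, with abstraction interpreted by post-composing with $\Lambda_{\interp{\Delta},\interp{\tau},\interp{\sigma}}$ and application by $\ev_{\interp{\tau},\interp{\sigma}} \circ ((-)\tensor(-))$. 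As $\Lambda$, the action of $\tensor$ on morphisms, and $\circ$ are all (set-theoretic) functions, replacing a subterm by one with the same typing and equal interpretation leaves the interpretation of the whole term unchanged; formally this is a one-line induction on a one-hole context, or equivalently the observation that $\approx$ is closed under the rules of Figure~\ref{fig:lamp}.

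There is no genuine obstacle here; the only bookkeeping point worth flagging is that $\interp{-}$ is defined on typing \emph{derivations} rather than on raw terms, so the congruence argument should be carried out on derivations. But $\eta$-equivalent terms are taken to have equal typings, and the interpretation of a derivation is assembled by applying the same categorical operations along the shared shape of the derivation tree, so replacing a subderivation by one with the same conclusion and equal interpretation does not disturb the rest. (Alternatively, one could invoke the footnoted fact that $\interp{-}$ depends only on the conclusion of the derivation, but this is not needed.)
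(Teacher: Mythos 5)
Your argument is correct and is essentially the paper's own proof: the paper disposes of this corollary by an ``easy induction'' on the congruence closure using Lemma~\ref{lem:interp-eta-eq}, which is exactly your least-congruence argument spelled out (including the sound observation that in the purely affine fragment every function type is a $\lolly$-type, so the lemma covers all base clauses). Nothing is missing.
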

\begin{proof*}{Proof idea}
Easy induction using Lemma~\ref{lem:interp-eta-eq}.
\end{proof*}

\begin{restatable}{lem}{interpsubstle}
\label{lem:interp-subst-le}
Suppose we have $\judgement{\underline{\Gamma}}{\Delta, \hasType{x}{\tau}, \Delta'}{t}{\sigma}$
and
$\judgement{\underline{\Gamma}}{\Delta''}{u}{\tau}$. Then we have
  \[ \interp{t[u/x]} \le \interp{t} \circ (\id_{\interp{\Delta}} \tensor \interp{u} \tensor \id_{\interp{\Delta'}}) \qquad \qquad {\footnotesize (~~: \interp{\Delta, \Delta'',
  \Delta'} \to \interp{\tau})}\]
\end{restatable}
\begin{proof}
The proof is by induction over the typing derivation of $t$. We will use throughout
that $\circ$ and $\tensor$ are monotone, that $\bot_A \tensor \bot_B = \bot_{A \tensor B}$ as well as $\id_\unit = \bot_\unit$ and that $\bot_A \le f$ for any $f : \unit \to A$
without calling explictly attention to it.
\begin{itemize}
\item If $t$ is the variable $x$, then both sides are equal to
  $\bot_{\interp\Delta} \tensor \interp{u} \tensor \bot_{\interp{\Delta'}}$.
\item If $t$ a variable other than $x$ from the linear part of the context,
  say $y$ from $\Delta$ such that we have $\Delta = \Theta, y : \sigma, \Theta'$
    (the case where $y$ is from $\Delta'$ is treated analogously), we derive the following using that $\bot_{\interp{\Delta''}} \le \bot_{\interp{\sigma}}
    \circ \interp{u}$:
\[
\begin{array}{lcl}
  \interp{y[u/x]} &=& \bot_{\interp{\Theta}} \tensor \id_{\interp{\sigma}} \tensor \bot_{\interp{\Theta', \Delta'', \Delta'}}\\
  &=&
  \bot_{\interp{\Theta}} \tensor \id_{\interp{\sigma}} \tensor \bot_{\interp{\Theta'}} \tensor \bot_{\Delta''} \tensor \bot_{\interp{\Delta'}}\\
&\le&
  \bot_{\interp{\Theta}} \tensor \id_{\interp{\sigma}} \tensor \bot_{\interp{\Theta'}} \tensor (\bot_{\interp{\sigma}} \circ \interp{u}) \tensor \bot_{\interp{\Delta'}}\\
&=&
  \bot_{\interp{\Theta}} \tensor \id_{\interp{\sigma}} \tensor \bot_{\interp{\Theta'}} \tensor (\bot_{\interp{\sigma}} \circ \interp{u}) \tensor \bot_{\interp{\Delta'}}\\
&=&
  (\bot_{\interp{\Theta}} \tensor \id_{\interp{\sigma}} \tensor \bot_{\interp{\Theta'}} \tensor \bot_{\interp{\sigma}} \tensor \bot_{\interp{\Delta'}})
  \circ (\id_{\interp{\Delta}} \tensor \interp{u} \tensor \id_{\interp{\Delta'}})\\
&=&
  \interp{y}
  \circ (\id_{\interp{\Delta}} \tensor \interp{u} \tensor \id_{\interp{\Delta'}})\\
\end{array}
\]
\item If $t$ is a variable of $\underline{\Gamma}$, the desired inequality follows
  from
\[\bot_{\interp{\Delta, \Delta'', \Delta'}} \le \bot_{\interp{\Delta} \tensor \interp{\tau} \tensor \interp{\Delta'}} \circ (\id_{\interp{\Delta}} \tensor
    \interp{u} \tensor \id_{\interp{\Delta'}})\]
\item If $t = f \; g$ for $f : \tau \lolly \sigma$, $g : \tau$, we have two subcases
  according to which context $x$ appears in.
  \begin{itemize}
    \item Suppose $x$ appears in the context of $f$ so that we have, $\Delta' = \Delta'_f, \Delta'_g$ and
  judgements
  \[\judgement{\underline{\Gamma}}{\Delta, \hasType{x}{\tau}, \Delta_f'}{f}{\tau \lolly \sigma} \qquad \text{and} \qquad
      \judgement{\underline{\Gamma}}{\Delta'_g}{g}{\tau}\]
By the induction hypothesis, we have
$\interp{f[u/x]} \le
\interp{f} \circ (\id_{\interp{\Delta}} \tensor \interp{u} \tensor \id_{\interp{\Delta_f'}})
$, which allows to derive
\[
\begin{array}{lcl}
  \interp{t[u/x]} &=& \interp{f[u/x] \; g} \\
  &=&
  \ev_{\interp{\tau}, \interp{\sigma}} \circ \left(\interp{f[u/x]} \tensor \interp{g}\right)
  \\
  &\le&
\ev_{\interp{\tau}, \interp{\sigma}} \circ \left(
\left(\interp{f} \circ (\id_{\interp{\Delta}} \tensor \interp{u} \tensor \id_{\interp{\Delta_f'}})\right)
  \tensor \interp{g}\right)\\
&=&
\ev_{\interp{\tau}, \interp{\sigma}} \circ \left(
  \left(\interp{f} \circ (\id_{\interp{\Delta}} \tensor \id_{\interp{\tau}} \tensor \id_{\interp{\Delta_f'}})\right)
  \tensor \interp{g}\right) \circ
(\id_{\interp{\Delta}} \tensor \interp{u} \tensor \id_{\interp{\Delta_f'}} \tensor \id_{\interp{\Delta_g'}})\\
&=&
\ev_{\interp{\tau}, \interp{\sigma}} \circ \left(
\interp{f} \circ (\id_{\interp{\Delta, x : \tau, \Delta_f'}}
  \tensor \interp{g})\right) \circ
(\id_{\interp{\Delta}} \tensor \interp{u} \tensor \id_{\interp{\Delta_f'}} \tensor \id_{\interp{\Delta_g'}})\\
&=&
 \interp{f \; g} \circ (\id_{\interp{\Delta}} \tensor \interp{u} \tensor \id_{\interp{\Delta_f'}} \tensor \id_{\interp{\Delta_g'}})\\
\end{array}
\]
  \item The
  case when $x$ appears in the context of $g$ is very similar and left to the reader.
  \end{itemize}
\item If $t[u/x] = \lambda y. \; t'[u/x]$ with $y \neq x$, then
the premise of the rule under consideration is $\judgement{\underline{\Gamma}}{\Delta,
    \hasType{x}{\tau}, \Delta', \hasType{y}{\tau'}}{t}{\sigma}$ and the induction
    hypothesis thus is
\[\interp{t'[u/x]} \le
\interp{t'} \circ (\id_{\interp{\Delta}} \tensor
  \interp{u} \tensor \id_{\interp{\Delta'}} \tensor \id_{\interp{\tau'}})
\]
So the result is then derived as follows, using the monotonicity of $\Lambda$ and
that we have $\Lambda_{A,B,C}(h \circ (\ell \tensor \id_B)) = \Lambda_{A,B,C}(h) \circ \ell$
in monoidal closed categories:
\[
\begin{array}{lcl}
  \interp{t[u/x]} &=& \interp{\lambda y. \; t'[u/x]} \\
  &=& \Lambda_{\interp{\Delta} \tensor \interp{\tau} \tensor \interp{\Delta'}, \interp{\tau'}, \interp{\sigma}}(\interp{t'[u/x]})\\
  &\le&
\Lambda_{\interp{\Delta} \tensor \interp{\tau} \tensor \interp{\Delta'}, \interp{\tau'}, \interp{\sigma}}
(
\interp{t'} \circ (\id_{\interp{\Delta}} \tensor
  \interp{u} \tensor \id_{\interp{\Delta'}} \tensor \id_{\interp{\tau'}})
)\\
&=&
\Lambda_{\interp{\Delta} \tensor \interp{\tau} \tensor \interp{\Delta'}, \interp{\tau'}, \interp{\sigma}}
(\interp{t'})
\circ (\id_{\interp{\Delta}} \tensor \interp{u} \tensor \id_{\interp{\Delta'}})
)
  \\
&=&
  \interp{\lambda y. \; t'} 
\circ (\id_{\interp{\Delta}} \tensor \interp{u} \tensor \id_{\interp{\Delta'}})
)
  \\
\end{array}
\]
\end{itemize}
\end{proof}

\begin{corollary}
\label{cor:interp-beta-le}
If we have $t \to_\beta u$, then we have that $\interp{t} \ge \interp{u}$.
\end{corollary}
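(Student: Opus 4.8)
The plan is to induct on the derivation that $t \to_\beta u$, which (since $\to_\beta$ is the least congruence containing the root rule) is built from the single base clause $(\lambda x.\,t')\,v \to_\beta t'[v/x]$ together with the three congruence closure rules: under $\lambda$-abstraction, on the left of an application, and on the right of an application. Throughout we use that $t$ and $u$ receive the same typing $\underline{\Gamma}; \Delta \vdash - : \tau$ (subject reduction), and that $\circ$, $\tensor$ and $\Lambda$ are all monotone, being part of the enriched strict monoidal-closed structure.

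\emph{Base case.} Say $t = (\lambda x.\,t')\,v$ and $u = t'[v/x]$, arising from $\judgement{\underline{\Gamma}}{\Delta, \hasType{x}{\tau}}{t'}{\sigma}$ and $\judgement{\underline{\Gamma}}{\Delta''}{v}{\tau}$, so that both $t$ and $u$ have typing $\judgement{\underline{\Gamma}}{\Delta, \Delta''}{-}{\sigma}$. Unfolding Figure~\ref{fig:interp-mccbot} on the application and then on the abstraction gives
\[
\interp{t} \;=\; \ev_{\interp{\tau},\interp{\sigma}} \circ \bigl(\Lambda_{\interp{\Delta},\interp{\tau},\interp{\sigma}}(\interp{t'}) \tensor \interp{v}\bigr).
\]
By the counit equation of the internal-hom adjunction, $\ev_{\interp{\tau},\interp{\sigma}} \circ (\Lambda_{\interp{\Delta},\interp{\tau},\interp{\sigma}}(\interp{t'}) \tensor \id_{\interp{\tau}}) = \interp{t'}$, and bifunctoriality of $\tensor$ then yields $\interp{t} = \interp{t'} \circ (\id_{\interp{\Delta}} \tensor \interp{v})$. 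Now apply Lemma~\ref{lem:interp-subst-le} with $\Delta' = \cdot$ (this is exactly the available instance, since in the planar calculus the abstraction rule peels off the \emph{last} context variable): $\interp{t'[v/x]} \le \interp{t'} \circ (\id_{\interp{\Delta}} \tensor \interp{v})$. Chaining the two, $\interp{u} = \interp{t'[v/x]} \le \interp{t}$.

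\emph{Congruence cases.} If $t = \lambda y.\,t_0$ and $u = \lambda y.\,u_0$ with $t_0 \to_\beta u_0$, then the induction hypothesis gives $\interp{t_0} \ge \interp{u_0}$, and monotonicity of $\Lambda$ gives $\interp{t} = \Lambda(\interp{t_0}) \ge \Lambda(\interp{u_0}) = \interp{u}$. If $t = t_0\,v$ and $u = u_0\,v$ with $t_0 \to_\beta u_0$, then by monotonicity of $\tensor$ and $\circ$ we get $\interp{t} = \ev \circ (\interp{t_0} \tensor \interp{v}) \ge \ev \circ (\interp{u_0} \tensor \interp{v}) = \interp{u}$; the case $t = v\,t_0$ is symmetric. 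This exhausts the clauses generating $\to_\beta$.

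\emph{Expected main obstacle.} There is essentially no obstacle left once Lemma~\ref{lem:interp-subst-le} is in hand; the only points requiring care are bookkeeping ones, namely that the instance of the substitution lemma needed in the base case has empty right context and that the one-step relation $\to_\beta$ is genuinely generated by exactly the four clauses above, so the induction is complete. Finally, iterating along a reduction sequence and using transitivity of $\le$ upgrades this to $t \to_\beta^* u \implies \interp{t} \ge \interp{u}$; together with Corollary~\ref{cor:interp-eta-eq} and Proposition~\ref{prop:normalization} this gives the promised $\interp{t_\NF} \le \interp{t}$.
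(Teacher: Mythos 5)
Your proof is correct and follows essentially the same route as the paper, which disposes of this corollary by the one-line remark ``easy induction using monotonicity of $\circ$ and $\tensor$ together with Lemma~\ref{lem:interp-subst-le}''; you simply spell out the base case (unfolding the redex via the counit equation and invoking the substitution lemma with empty right context) and the congruence cases. No gaps.
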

\begin{proof*}{Proof idea}
Easy induction using monotonicity of $\circ$ and $\tensor$ together with Lemma~\ref{lem:interp-subst-le}.
\end{proof*}

We can thus conclude with the only information we will need in the next
subsection.

\begin{corollary}
\label{cor:interp-normal}
For any $t$ whose normal form is $t_\NF$, we have $\interp{t_\NF} \le \interp{t}$.
\end{corollary}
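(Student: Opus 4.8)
The plan is simply to chain together Proposition~\ref{prop:normalization} with Corollary~\ref{cor:interp-beta-le} using transitivity of the hom-posets, since this corollary is a bookkeeping consequence of the two preceding results. First I would invoke Proposition~\ref{prop:normalization} to obtain a normal term $t_\NF$ with the same typing as $t$ together with a finite $\beta$-reduction sequence
\[
t = t_0 \to_\beta t_1 \to_\beta \cdots \to_\beta t_n = t_\NF \qquad (n \ge 0)
\]
witnessing $t \to_\beta^* t_\NF$. The point to record here is that $\beta$-reduction preserves typing judgements, so every intermediate $t_i$ satisfies $\underline{\Gamma}; \Delta \vdash t_i : \tau$; consequently all the $\interp{t_i}$ are parallel morphisms in the single hom-set $\Hom{\cC}{\interp{\Delta}}{\interp{\tau}}$ and comparing them with $\le$ actually makes sense.

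Then I would apply Corollary~\ref{cor:interp-beta-le} to each single reduction step, obtaining $\interp{t_i} \ge \interp{t_{i+1}}$ for every $0 \le i < n$. Chaining these inequalities by transitivity of the partial order on $\Hom{\cC}{\interp{\Delta}}{\interp{\tau}}$ — with reflexivity covering the degenerate case $n = 0$, where $t$ is already normal and $t_\NF = t$ — yields $\interp{t} = \interp{t_0} \ge \interp{t_n} = \interp{t_\NF}$, which is exactly $\interp{t_\NF} \le \interp{t}$.

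I do not expect any genuine obstacle: the only subtlety is the one flagged above, namely that the order comparison is only defined between parallel morphisms, and this is guaranteed by subject reduction for $\to_\beta$ together with the (footnoted, but unneeded) fact that $\interp{-}$ depends only on the conclusion of the typing derivation. Everything else is a direct induction on the length of the reduction sequence, or equivalently an appeal to the fact that $\ge$ is the reflexive–transitive closure-compatible order obtained from the single-step statement of Corollary~\ref{cor:interp-beta-le}.
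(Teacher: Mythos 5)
Your proof is correct and is exactly the argument the paper intends (the paper leaves this corollary without explicit proof, as an immediate consequence of Proposition~\ref{prop:normalization} and Corollary~\ref{cor:interp-beta-le}): obtain the reduction sequence from normalization, apply the single-step inequality to each step, and chain by transitivity of the hom-poset order. Your remark about subject reduction keeping all the interpretations in the same hom-set is a reasonable bookkeeping observation and does not change the substance.
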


\subsection{From $\lampa$-transducers to 2PRFTs}
\label{subsec:lambda22PRFT}

Now we fix an output alphabet $\Gamma$ for the $\lampa$-transducer.
We shall then use the interpretation from the previous subsection with
$\cC = \catTransition_\Gamma$, $\interp{\basety} = +-$ and the interpretation
of the constants of $\underline{\Gamma}$ given in Figure~\ref{fig:generators}.

\begin{restatable}{lem}{interpwords}
\label{lem:interp-words}
For $w \in \Gamma^*$,  $\underline{\Gamma}; \cdot \vdash \underline{w} : \basety$ is
interpreted by the diagram
  \begin{tikzpicture}[on grid,->, scale = 0.5, transform shape, baseline=(current bounding box.center)]
      \node (in first) at (0,0) [draw opacity=0] {};
      \node (in last) [below=of in first] [draw opacity=0] {};

      \node (out 1) [right=2 of in first][plus];
      \node (out 2) [below=of out 1] [minus];
      \draw (out 2) to [bend left=90] node [left] {$w$} (out 1);

      \begin{scope}[on background layer]
        \draw[box] ([yshift=.5cm]in first) rectangle ([yshift=-.5cm]out 2);
      \end{scope}
    \end{tikzpicture}.
\end{restatable}
\begin{proof}
This is done by an induction over $w$. When $w = \emptyString$, this is obvious.
When $w = aw'$, we have 
  $\interp{\underline{aw'}} = \interp{a \; \underline{w'}} = (\interp{a} \tensor \interp{\underline{w'}}) ; \ev_{\interp{\basety},\interp{\basety}}$.
Applying the induction hypothesis and drawing out the picture of this composition,
we can conclude by chasing the path.
\begin{center}
    \begin{tikzpicture}
      [on grid,->]
      \node (in first) at (0,0) [draw opacity=0] {};
      \node (in last) [below=5 of in first] [draw opacity=0] {};
      \node (in last last) [below=of in last] [draw opacity=0] {};
      \node (out last) [right=4 of in last] [draw opacity=0] {};

      \node (out 1) [right=2 of in first][plus];
      \node (out 2) [below=of out 1] [minus];
      \node (out 3) [below=of out 2] [plus];
      \node (out 4) [below=of out 3] [minus];

      \draw (out 4) to [bend right=270] (out 1);
      \draw (out 2) to [bend right=90] node [right] {$a$} (out 3);

      \node (out 5) [below=of out 4][plus];
      \node (out 6) [below=of out 5] [minus];

      \draw (out 6) to [bend right=270] node [left] {$w'$} (out 5);

      \node (outout 1) [right=2 of out 1] [plus];
      \node (outout 2) [below=of outout 1] [minus];
      \node (outout 3) [below=1.5 of outout 2] [draw opacity=0] {};
      \node (eq) [right=2 of outout 3] {=};

      \node (outout 25) [below=of outout 2] [draw opacity=0] {};

      \node (in2 first) [right=4 of outout 25] [draw opacity=0] {};
      \node (in2 last) [below=of in2 first] [draw opacity=0] {};

      \node (out2 1) [right=2 of in2 first][plus];
      \node (out2 2) [below=of out2 1] [minus];
      \draw (out2 2) to [bend left=90] node [left] {$aw'$} (out2 1);

      \begin{scope}[on background layer]
        \draw[box] ([yshift=.5cm]in2 first) rectangle ([yshift=-.5cm]out2 2);
      \end{scope}

      \draw (out 5) to [bend right=90] (out 4);
      \draw (out 3) to [bend left=90] (out 6);
      \draw (out 1) to (outout 1);
      \draw (outout 2) to (out 2);
      \node (tensor) [right=of in last last] {$(\interp{a} \tensor \interp{\underline{w'}})$};
      \node (punc) [right=of tensor] {~~;};
      \node (ev) [right=of punc] {$\ev_{+-,+-}$};
      \begin{scope}[on background layer]
        \draw[box] ([yshift=.5cm]in first) rectangle ([yshift=-.5cm]out last);
      \end{scope}
    \end{tikzpicture}
\end{center}
\end{proof}

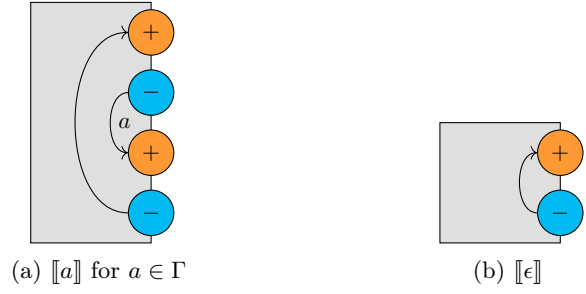
\begin{figure}
  \centering
  \begin{subfigure}{0.3\linewidth}
    \centering
    \begin{tikzpicture}
      [on grid,->, scale = 0.8, transform shape, baseline=(current bounding box.center)]
      \node (in first) at (0,0) [draw opacity=0] {};
      \node (in last) [below=3 of in first] [draw opacity=0] {};

      \node (out 1) [right=2 of in first][plus];
      \node (out 2) [below=of out 1] [minus];
      \node (out 3) [below=of out 2] [plus];
      \node (out 4) [below=of out 3] [minus];

      \draw (out 4) to [bend right=270] (out 1);
      \draw (out 2) to [bend right=90] node [right] {$a$} (out 3);

      \begin{scope}[on background layer]
        \draw[box] ([yshift=.5cm]in first) rectangle ([yshift=-.5cm]out 4);
      \end{scope}
    \end{tikzpicture}
    \caption{$\interp{a}$ for $a \in \Gamma$}
  \end{subfigure}
  \begin{subfigure}{0.3\linewidth}
    \centering
    \begin{tikzpicture}
      [on grid,->, scale = 0.8, transform shape, baseline=(current bounding box.center)]
      \node (in first) at (0,0) [draw opacity=0] {};
      \node (in last) [below=1 of in first] [draw opacity=0] {};

      \node (out 1) [right=2 of in first][plus];
      \node (out 2) [below=of out 1] [minus];

      \draw (out 2) to [bend right=270] (out 1);

      \begin{scope}[on background layer]
        \draw[box] ([yshift=.5cm]in first) rectangle ([yshift=-.5cm]out 2);
      \end{scope}
    \end{tikzpicture}
    \caption{$\interp{\epsilon}$}
  \end{subfigure}
  \caption{Interpretation of constants as diagrams}
  \label{fig:generators}
\end{figure}

\begin{theorem}
Every $\lampa$-transducer can be converted into an equivalent 2PRFT in polynomial time.
\end{theorem}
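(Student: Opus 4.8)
The plan is to read the 2PRFT straight off the interpretation $\interp{-}$ of Subsection~\ref{subsec:interplambda}, instantiated at $\cC = \catTransition_\Gamma$, $\interp{\basety} = +-$ and the constants of $\underline{\Gamma}$ as in Subsection~\ref{subsec:lambda22PRFT}. Given a $\lampa$-transducer with iteration type $\kappa$, transition terms $d_a : \kappa \lin \kappa$ for $a \in \Sigma$, initial term $d_\epsilon : \kappa$ and output term $o : \kappa \lin \basety$, I take the functor $\cA : \catAutomata{\Sigma} \to \catTransition_\Gamma$ determined by $\cA(\sIn) = \emptyString$, $\cA(\sOut) = +-$, $\cA(\sState) = \interp{\kappa}$, and, on the generating morphisms, by $\cA(\stringStart) = \interp{d_\epsilon} : \unit \to \interp{\kappa}$, by $\cA(a) = \interp{d_a \; y} : \interp{\kappa} \to \interp{\kappa}$ for each $a \in \Sigma$ (the interpretation of the judgement $\underline{\Gamma};\, y : \kappa \vdash d_a \; y : \kappa$), and by $\cA(\stringEnd) = \interp{o \; y} : \interp{\kappa} \to +-$. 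Since $\catAutomata{\Sigma}$ is freely generated by its defining graph, this data extends uniquely to a functor, which by construction is a $(\catTransition_\Gamma, \emptyString, +-)$-automaton, hence a 2PRFT.

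I would then check that $\cA$ computes the same string function as the given transducer. The key point is that $\interp{-}$ is compositional on applications of closed subterms: unfolding the application clause of Figure~\ref{fig:interp-mccbot} and using functoriality of $\tensor$ together with the fact that $\interp{y} = \id_{\interp{\kappa}}$ in the context $y : \kappa$, one obtains $\interp{d_a \; u} = \cA(a) \circ \interp{u}$ and $\interp{o \; u} = \cA(\stringEnd) \circ \interp{u}$ for every closed $u : \kappa$ over $\underline{\Gamma}$, while $\interp{d_\epsilon} = \cA(\stringStart)$ holds by definition. Iterating this along a word $w = w_0 \cdots w_n$ shows that the semantics of $\cA$ on $w$, namely $\cA(\stringEnd) \circ \cA(w_n) \circ \cdots \circ \cA(w_0) \circ \cA(\stringStart)$ (recall that composition in $\catAutomata{\Sigma}$ is left-to-right and that $f ; g = g \circ f$), equals $\interp{t_w}$ where $t_w = o \; (d_{w_n} \; (\cdots (d_{w_0} \; d_\epsilon) \cdots))$ is exactly the term whose normal form defines the transducer's output on $w$. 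It then remains to see that reading off the label of the diagram $\interp{t_w}$ recovers that output word. Since $t_w$ is a closed term of type $\basety$ over $\underline{\Gamma}$, by Proposition~\ref{prop:normalization} and inspection of normal forms its normal form is $\underline{v}$ for some $v \in \Gamma^*$, and this $v$ is precisely the transducer's output on $w$. By Lemma~\ref{lem:interp-words}, $\interp{\underline{v}}$ is the morphism $\unit \to +-$ consisting of the single edge joining the two output vertices and labelled by $v$; and by Corollary~\ref{cor:interp-normal} we have $\interp{\underline{v}} \le \interp{t_w}$. Since a morphism $\unit \to +-$ in $\catTransition_\Gamma$ has only two vertices and therefore carries at most one edge, $\interp{t_w}$ must consist of exactly that edge with label $v$, so reading off its label yields $v$; hence $\cA$ and the transducer induce the same function $\Sigma^* \to \Gamma^*$.

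For the complexity claim, producing $\cA$ amounts to computing the morphisms $\interp{d_\epsilon}$, the morphisms $\interp{d_a \; y}$ (one per letter of $\Sigma$), and $\interp{o \; y}$. As noted after Figure~\ref{fig:interp-mccbot}, each such interpretation is computable in polynomial time in the size of the term, since $\lampa$ type-checking is polynomial-time, composition in $\catTransition_\Gamma$ is logarithmic-space, and the edge labels produced remain of total size linear in the term; summing over $\Sigma$ keeps the whole conversion polynomial-time.

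The step I expect to be the main obstacle is the passage through the normal form: $\interp{-}$ does not preserve $\beta$-equivalence and only yields $\interp{t_\NF} \le \interp{t}$, so the argument genuinely relies on the very simple shape of $\Hom{\catTransition_\Gamma}{\unit}{+-}$ --- its diagrams have at most one edge --- to upgrade that inequality into the equality of edge labels that we actually need. The remaining ingredients (compositionality of $\interp{-}$ on applications, the unfolding of $\cA(w)$, and the complexity bookkeeping) are routine.
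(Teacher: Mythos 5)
Your proposal is correct and follows essentially the same route as the paper: your $\cA(a)=\interp{d_a\;y}$ and $\cA(\stringEnd)=\interp{o\;y}$ coincide with the paper's $\Lambda^{-1}_{\unit,\interp{\kappa},\interp{\kappa}}(\interp{d_a})$ and $\Lambda^{-1}_{\unit,\interp{\kappa},\interp{\basety}}(\interp{o})$, and the semantic-preservation argument is the same combination of compositionality (the paper's face (1)), Lemma~\ref{lem:interp-words}, and Corollary~\ref{cor:interp-normal}. Your way of upgrading the inequality to an equality (a morphism $\unit\to{+-}$ carries at most one edge, so $\interp{\underline v}\le\interp{t_w}$ forces the same labelled edge) is just a concrete rephrasing of the paper's maximality argument in $\Gamma^*_\bot$.
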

\begin{proof}
As per Definition~\ref{def:normalAffineStr2Str}, assume that we have a purely
affine iteration type $\kappa$, terms
$\judgement{\underline{\Gamma}}{\nullEnv}{d_a}{\kappa \lolly \kappa}$
for each $a \in \Sigma$,
$\judgement{\underline{\Gamma}}{\nullEnv}{o}{\kappa \lolly \basety}$ and
$\judgement{\underline{\Gamma}}{\nullEnv}{d_\epsilon}{\basety \lolly
  \kappa}$. Using the semantic interpretation given above, we obtain the respective
  morphisms $\interp{d_a} : \unit \to \interp{\kappa}\lolly\interp{\kappa}$ (for each $a \in \Sigma$),
$\interp{o} : \unit \to \interp{\kappa} \lolly +-$ and
$\interp{d_\epsilon} : \unit \to \interp{\kappa}$.
We define the equivalent 2PRFT $\cT$ on the generating morphisms of $\catAutomata{\Sigma}$ as follows.
\[\cT(a) ~~=~~
  \Lambda^{-1}_{\unit, \interp{\kappa},\interp{\kappa}}(\interp{d_a})
\qquad
\cT(\triangleleft) ~~=~~
\Lambda^{-1}_{\unit,\interp{\kappa},\interp{\basety}}(\interp{o})
\qquad \text{and} \qquad
\cT(\triangleright) =
\interp{d_\epsilon}
\]
To prove that $\cT$ computes the same function as the $\lampa$-transducer given,
let's consider the diagram below.
\[\begin{tikzcd}
	{\Sigma^*} &&& {\{t \mid \underline{\Gamma};\cdot \vdash t : \basety\}} && {\{\underline{w} \mid w \in\Gamma^*\}} & {\Gamma^*} \\
	\\
	&&&&& {\Hom{\catTransition_\Gamma}{\unit}{+-}} & {\Gamma^*_\bot}
	\arrow["{\text{normalize}}", from=1-4, to=1-6]
	\arrow["{\interp{-}}"'{pos=0.3}, curve={height=18pt}, from=1-4, to=3-6]
	\arrow["{\interp{-}}", from=1-6, to=3-6]
	\arrow["\cT"', curve={height=24pt}, from=1-1, to=3-6]
	\arrow["{w \mapsto o\; (d_{w_n} \; (\ldots d_\epsilon) \ldots))}", from=1-1, to=1-4]
	\arrow["{\Large{\ge}}"{description}, curve={height=-6pt}, draw=none, from=1-4, to=3-6]
	\arrow["{(1)}"{description}, draw=none, from=1-1, to=3-6]
	\arrow["{(2)}"{description}, draw=none, from=1-6, to=3-7]
	\arrow["\cong"{description}, from=3-6, to=3-7]
	\arrow["\cong"{description}, from=1-6, to=1-7]
	\arrow["{{\subseteq}}"{description}, from=1-7, to=3-7]
\end{tikzcd}\]

By inspecting the definitions, the map defined by the
$\lampa$-transducer is obtained by following the topmost maximal path while the map
defined by $\cT$ is given by the bottommost maximal path, which we must argue
define the same map. To do so, it suffices to show that faces (1)
and (2) commute while the central face denotes an inequality between maps; here
all nodes are equipped with an order structure by taking the discrete order
for the objects on the top row, the order from the enriched structure
of $\catTransition_\Gamma$ for $\Hom{\catTransition_\Gamma}{\unit}{+-}$ and
by taking for $\Gamma_\bot^*$ the minimal order such that $\bot \le w$ for $w \in \Gamma^*$.
Then the maps are ordered by pointwise ordering.
That the inequality ``top path $\le$ bottom path'' suffices to derive
``top path = bottom path'' is because the top path necessarily is a maximal
element for the pointwise ordering of maps. This is due to the fact that
$\Gamma^*$ consists of the maximal elements of $\Gamma^*_\bot$.

That (2) commutes is exactly the statement of Lemma~\ref{lem:interp-words} while
the inequality in the central face is Corollary~\ref{cor:interp-normal}.
All that remains to be proven is that (1) commutes. This is witnessed by the
chain of equations below for a fixed input word
$w = w_1 \ldots w_n \in \Sigma^*$.
\[
  \begin{array}{lclr}
\cT(\triangleright w \triangleleft)
&=&
\cT(\triangleleft) \circ \cT(w_n) \circ \ldots \circ \cT(w_1) \circ \cT(\triangleright)
& {\footnotesize \text{(by functoriality)}}
\\
&=&
\Lambda^{-1}_{\unit,\interp{\kappa},\interp{\basety}}(\interp{o}) \circ
    \Lambda^{-1}_{\unit, \interp{\kappa},\interp{\kappa}}(\interp{d_{w_n}})
\circ \ldots \circ
    \Lambda^{-1}_{\unit, \interp{\kappa},\interp{\kappa}}(\interp{d_{w_1}})
\circ
\interp{d_\epsilon}
& {\footnotesize \text{(by definition of $\cT$)}}
\\
&=&
    \multicolumn{2}{l}{
\ev_{\interp{\kappa},\interp{\basety}} \circ (\interp{o} \tensor \id_{\interp{\kappa}}) \circ
\ev_{\interp{\kappa}, \interp{\kappa}} \circ (\interp{d_{w_n}} \tensor \id_{\interp{\kappa}})
\circ \ldots \circ
\ev_{\interp{\kappa}, \interp{\kappa}} \circ (\interp{d_{w_1}} \tensor \id_{\interp{\kappa}})
\circ
    \interp{d_\epsilon}}
    \\&&\multicolumn{2}{r}{\footnotesize\text{(because $\Lambda^{-1}_{A,B,C}(f) = \ev_{B,C} \circ (f \tensor \id_B)$})}
\\
&=&
    \multicolumn{2}{l}{
\ev_{\interp{\kappa},\interp{\basety}} \circ (\interp{o} \tensor 
(\ev_{\interp{\kappa}, \interp{\kappa}} \circ (\interp{d_{w_n}} \tensor \ldots
    \ev_{\interp{\kappa}, \interp{\kappa}} \circ (\interp{d_{w_1}} \tensor \interp{d_\epsilon})\ldots)))
}
\\&&&
    {\footnotesize  \text{(by functoriality of $\tensor$ and $A \tensor \unit = A$)}}
\\
&=&
    \interp{o \; (d_{w_n} \; \ldots (d_{w_1} \; d_\epsilon) \ldots)}
    &
    {\footnotesize    \text{(by definition of $\interp{-}$)}}
\end{array}
\]
\end{proof}

\subsection{From first-order transductions to $\lampa$}
\label{subsec:fotrans2lambda}

Now we wish to prove the converse direction of Theorem~\ref{thm:main-planar-str}, that
is that every FO-transduction can be encoded in $\lampa$.
Much like in~\cite{aperiodic,planartrans}, we rely on the fact that affine $\lampa$-definable
string-to-string functions are closed under composition.
Using this and the seminal Krohn-Rhodes decomposition theorem, it
was already shown that affine $\lampa$-definable functions include all \emph{sequential
functions}~\cite[Theorem 5.4]{aperiodic}. We thus rely on the same strategy
that is used in~\cite{planartrans} to show that 2PRFTs compute all first-order transductions.

\begin{lemma}[{rephrasing of~\cite[Lemma~4.8]{ListFunctions}, see also~\cite[Lemma 4.3]{planartrans}}]
  \label{lem:fotransKR}
  Every first-order transduction can be decomposed as $f \circ \ttreverse \circ g \circ
  \ttreverse \circ h$ where $f$ is computed by a \emph{monotone register
  transducer} and the functions $g$ and $h$ are aperiodic sequential.
\end{lemma}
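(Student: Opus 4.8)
The plan is to obtain this statement from existing decomposition theorems for first-order transductions rather than to reprove a Krohn--Rhodes-style result from scratch; the actual content is a translation between the vocabulary of \cite{ListFunctions} (resp.\ \cite{planartrans}) and ours. First I would recall the precise statement of \cite[Lemma~4.8]{ListFunctions} and its reformulation \cite[Lemma~4.3]{planartrans}, which already express an arbitrary first-order transduction as a composition of occurrences of $\ttreverse$ interleaved with ``basic'' functions of two kinds: aperiodic sequential functions on the one hand, and one order-preserving register-assembling block on the other. The proof then reduces to two tasks: (i) checking that these basic blocks are exactly what we call, respectively, aperiodic sequential functions and monotone register transducers, and (ii) massaging the resulting composition into the precise normal form $f \circ \ttreverse \circ g \circ \ttreverse \circ h$.

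For (i), the sequential factors are one-way, letter-to-string transducers whose transition monoid is aperiodic, which is the usual definition of an aperiodic sequential function, so that identification is routine. The delicate one is the block $f$: a monotone register transducer is a copyless streaming string transducer whose register updates respect a fixed linear order on the registers (register contents are never permuted), and the point is that this monotonicity is precisely the constraint satisfied by the non-sequential building block of the cited decomposition — intuitively, the sequential pieces together with the two reverses absorb all the ``two-way'' back-and-forth, leaving a core that writes its output in an order-compatible, essentially one-directional way. I would spell out each translation, taking care that aperiodicity is preserved under the syntactic manipulations involved.

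For (ii), if the decomposition coming out of \cite{ListFunctions} has more than two occurrences of $\ttreverse$ or more than three non-reverse factors, I would collapse it using closure properties: aperiodic sequential functions are closed under composition, so two consecutive sequential factors fuse into one, and $\ttreverse \circ \ttreverse = \id$ lets one delete adjacent reverses; iterating these moves (and padding with identities if fewer factors remain) brings any such alternating composition to the shape $f \circ \ttreverse \circ g \circ \ttreverse \circ h$. The main obstacle here is not a deep one but the faithful bookkeeping: making sure our chosen definition of \emph{monotone register transducer} coincides on the nose with the device produced by the cited lemma, and that fusing factors in step (ii) breaks neither aperiodicity of the $g,h$ parts nor the monotonicity of the register block. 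Since both \cite{ListFunctions} and \cite{planartrans} essentially already contain this statement, I would present the argument as a short dictionary-style proof that defers the genuine combinatorics to those sources.
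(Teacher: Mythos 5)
Your proposal matches the paper's treatment: the paper gives no independent proof of this lemma, presenting it purely as a rephrasing of \cite[Lemma~4.8]{ListFunctions} (cf.\ \cite[Lemma 4.3]{planartrans}), which is exactly the dictionary-style deferral you describe. Your additional bookkeeping steps (identifying the basic blocks with aperiodic sequential functions and monotone register transducers, and normalising the composition using closure under composition and $\ttreverse\circ\ttreverse=\id$) are sound and consistent with how the paper uses the cited result.
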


Example~\ref{ex:lam-rev} already shows that $\ttreverse$ is affine $\lampa$-definable.
Now it only remains to show that functions computed by monotone register transducers~\cite{SST}
are affine $\lampa$-definable.
Those machines go through their inputs in a single left-to-right pass,
storing infixes of their outputs in registers that they may update by
performing concatenations of previously stored values and constants.
\emph{Monotone} here corresponds to the further restrictions that those
machines have no control states, that the output corresponds to a single
register and that the register updates satisfy a monotonicity condition in
addition to being copyless.

First, let us define the notion of update those machines can use. For simplicity,
throughout the rest of this section we assume a fixed output alphabet $\Gamma$,
disjoint from the set of natural numbers, and a fixed input alphabet $\Sigma$.
\begin{definition}
The set of \emph{copyless monotone register update} from $n$ registers to $k$
registers, which we write $\CMRupdate{n}{k}$, is the subset
consisting of those $k$-uples $(w_0, \ldots, w_{k-1})$ of words over
$\Gamma \cup \{0, \ldots, n-1\}$ such that:
\begin{itemize}
  \item every index $i < n$ occurs at most once in the overall tuple \hfill {\footnotesize(copylessness/affineness)}
  \item if we have that $i \le j < n$ occurring in $w_{\ell_i}$ and $w_{\ell_j}$
    respectively, then we have either that $\ell_i < \ell_j$, or $\ell_i = \ell_j$
    and $i$ occurs before $j$ in $w_{\ell_i}$. \hfill {\footnotesize (monotonicity/planarity)}
\end{itemize}

Given $\sigma \in \CMRupdate{k}{\ell}$ and $\sigma' \in \CMRupdate{n}{k}$,
the composition $\sigma \circ \sigma' \in \CMRupdate{n}{\ell}$ is defined by
substituting each index $i < k$ in $\sigma$ by the $i$th component of $\sigma'$
(this preserves copylessness and monotonicity).
\end{definition}
At the intuitive level, an element of $\CMRupdate{n}{k}$ encodes a function
$(\Gamma^*)^n \to (\Gamma^*)^k$ that can operate by concatenating together the
components of its inputs, subject to restrictions that match affineness and planarity\footnote{This could have alternatively been defined as a free
affine strict monoidal category with a monoid object and generators for the letters
of $\Gamma$.}.
For the sequel, write
$\pi_\ell \in \CMRupdate{k}{1}$ for $\ell < k$ for the obvious projections, $\epsilon^k$ for the updates of $\CMRupdate{0}{k}$
that initialize every register with the empty word and
$\registerContent$ for the canonical
isomorphism $\registerContent : \CMRupdate{0}{1} \cong \Gamma^*$.
With this in hand, we 
give a working definition of monotone register transducers.

\begin{definition}
  A \emph{monotone register transducer}
  consists of the following:
  \begin{itemize}
  \item a number $n$ of registers
  \item for each input letter $a\in\Sigma$, a copyless monotone register update $\sigma_a : x^n \to x^n$.
  \end{itemize}
  It computes the function
$
  \begin{array}{ccc}
  \Sigma^* & \longto & \Gamma^* \\
  a_1 \ldots a_n &\longmapsto & \registerContent(\pi_0 \circ \sigma_{a_n} \circ \ldots \circ \sigma_{a_1} \circ \epsilon^k) \\
  \end{array}
$
\end{definition}

Now we will argue that for every monotone register transducer with $n$ registers, we can produce an equivalent $\lampa$-transducer
with some iteration type $\kappa_n \lin \basety$.
The intuition behind the definition of $\kappa_n$ is that a register holding a string that support concatenations
can be encoded using the type $\basety \lin \basety$ and composition.
As we need $n$ copies of those, we thus set
\[\kappa_n = \underbrace{(\basety \lin \basety) \lin \ldots \lin (\basety \lin \basety)}_{\text{$n$-fold}} \lin \basety\]
so that $\kappa_n \lin \basety$ is a sufficiently expressive
stand-in for the $n$-fold tensor of $\basety \lin \basety$.

\begin{lemma}
\label{lem:CMRupdate-lambda}
Every $\sigma \in \CMRupdate{k}{n}$ maps to a $\lampa$-term
$\underline{\Gamma}; \cdot \vdash \underline{\sigma} : \kappa_n \lin \kappa_k$
in a way that is compatible with composition, that is
$\underline{\sigma \circ \sigma'} =_{\beta\eta} \lambda z. \; \underline{\sigma'} \; (\underline{\sigma} \; z)$.
Finally, if $\sigma \in \CMRupdate{0}{1}$, we have
  $\underline{\registerContent(\sigma)} =_{\beta\eta} \underline{\sigma} \; (\lambda x. x)$.
\end{lemma}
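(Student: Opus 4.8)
The plan is to define the translation $\underline{(-)}$ on the elementary pieces from which all copyless monotone register updates are generated, and then propagate the two claimed equalities along these generators. Concretely, I would first observe that $\CMRupdate{k}{n}$, viewed as the hom-sets of a category (with the composition given in the definition), is generated under composition and the "putting tuples side by side" operation by: the single-register updates writing a letter $b \in \Gamma$ or the empty word, the concatenation update $\CMRupdate{2}{1}$ sending $(u,v) \mapsto uv$, the unit/duplication-free structural maps one needs (projections $\pi_\ell$, the reorderings allowed by monotonicity, which are in fact forced to be the identity permutation so there is little freedom), and the empty-initialisation $\epsilon^k$. This is exactly the "free affine strict monoidal category with a monoid object and generators for the letters of $\Gamma$" mentioned in the footnote. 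So it suffices to define $\underline{\sigma}$ for each generator in a way that makes the recursion $\underline{\sigma \circ \sigma'} =_{\beta\eta} \lambda z.\, \underline{\sigma'}\,(\underline{\sigma}\, z)$ hold, using that $=_{\beta\eta}$ is a congruence and that $\lampa$ composition of the relevant terms is associative up to $=_{\beta\eta}$.

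\emph{The core encoding idea}: a tuple of $k$ registers is represented by a term of type $\kappa_k = (\basety \lin \basety) \lin \cdots \lin (\basety \lin \basety) \lin \basety$, i.e. something that, given $k$ "string slots" $f_0,\dots,f_{k-1} : \basety \lin \basety$, threads them through a fixed $\basety$. An update $\sigma \in \CMRupdate{k}{n}$, which builds $k$ output registers by concatenating pieces of $n$ input registers and constants, becomes a term $\underline{\sigma} : \kappa_n \lin \kappa_k$ that, given a $\kappa_n$-style consumer, supplies it with the appropriate $n$ slots assembled from the $k$ slots it is handed — reversing direction because $\kappa_n$ is contravariant in the "register content" argument. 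I would make the first slot receive the concatenation corresponding to $w_0$, etc., using $\lambda$-abstraction to pass each of the $k$ incoming slot-functions to exactly the one $w_j$ in which its index occurs (copylessness guarantees "at most one", and the convention in Definition~\ref{def:normalAffineStr2Str}-style transducers together with the $\mathtt{reverse}$ trick of Example~\ref{ex:lam-rev} lets us handle the ordering; monotonicity/planarity is precisely what makes the required usage "in order" so that the $\lin$ typing goes through). Letters $b\in\Gamma$ from $\underline{\Gamma}$ are inserted as the constants $b : \basety \lin \basety$. Checking this typechecks in $\lampa$ — in particular that every bound variable is used at most once and in order — is where the monotonicity condition of $\CMRupdate{}{}$ is consumed.

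\emph{For compatibility with composition}, once $\underline{(-)}$ is fixed on generators so that it is a monoidal functor "on the nose" up to $=_{\beta\eta}$ (i.e. $\underline{-}$ sends the generator-composite to the $\beta\eta$-class of the corresponding $\lambda$-composite, and sends side-by-side tupling to the evident combination), the general statement $\underline{\sigma \circ \sigma'} =_{\beta\eta} \lambda z.\, \underline{\sigma'}\,(\underline{\sigma}\, z)$ follows by induction on a presentation of $\sigma$ and $\sigma'$ as composites/tuplings of generators, using that $=_{\beta\eta}$ is preserved by application and abstraction and that $\beta$-reduction already gives $\lambda z.\, h\,(\lambda z'.\, g\,(k\, z'))\, z =_{\beta\eta} \lambda z.\, (\lambda z'. h\,(g\,z'))\,(k\,z)$ and similar associativity rearrangements. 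Finally, the $\registerContent$ clause: for $\sigma \in \CMRupdate{0}{1}$, $\sigma$ is just a word $w \in \Gamma^*$ over the empty index set, so $\underline{\sigma} : \kappa_1 \lin \kappa_0$ by construction builds the single slot as the concatenation $\underline{w}$-style map and then $\underline{\sigma}\,(\lambda x.x)$ feeds it the identity slot, which $\beta$-reduces to (the $\kappa_0 = \basety$-level) term representing $w$, i.e. $=_{\beta\eta} \underline{\registerContent(\sigma)}$; this is a direct unfolding once the base case of the encoding is written out, matching the computation in Lemma~\ref{lem:interp-words}'s spirit.

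\emph{The main obstacle} I expect is bookkeeping the \emph{order} of variable usage so that the $\multimap$-typing is legal: copylessness gives linearity, but planar $\lampa$ additionally demands that arguments occur left-to-right in the order they are bound, and making the encoding of a single update $(w_0,\dots,w_{n-1})$ respect this simultaneously for the incoming slots and for the constants requires using the monotonicity clause of $\CMRupdate{}{}$ in an essential way — and possibly, as in Example~\ref{ex:lam-rev}, interposing local $\mathtt{reverse}$-like reshuffling at the $\basety\lin\basety$ level. Getting the direction of this reversal consistent between the "update" encoding and the composition law is the delicate point; everything else is routine $\beta\eta$-calculation which I would not spell out in full.
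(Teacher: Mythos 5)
Your core encoding coincides with the paper's: for $\sigma=(w_1,\dots,w_n)\in\CMRupdate{k}{n}$ the paper defines $\underline{\sigma}$ directly as $\lambda F\, f_1\ldots f_k.\; F \; t_1 \ldots t_n$, where each $t_i$ is built from $w_i$ by composing the constants of $\underline{\Gamma}$ and the slot variables $f_j$, so on the substance the two arguments agree. Where you genuinely differ is the treatment of compatibility with composition: you route it through a presentation of the updates by generators (letters, concatenation, projections, initializations) plus an induction over such presentations, whereas the paper needs no presentation at all — since composition of register updates is by definition substitution of words for indices, one checks directly that $\beta$-reducing $\lambda z.\,\underline{\sigma'}\,(\underline{\sigma}\,z)$ performs exactly the corresponding substitution of the terms $t_i$ into the slots, and the $\registerContent$ clause is the same kind of unfolding. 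Your route can be made to work, but it is heavier: you would additionally owe the facts that every copyless monotone update factors as a composite/tupling of your generators and that $\underline{(-)}$ respects all relations of that presentation (the free-category description is only a footnote in the paper, not something it relies on). Finally, the ``delicate point'' you flag is not actually an obstacle, and resolving it is where your proposal is weakest: no $\mathtt{reverse}$-style reshuffling is needed or wanted. The monotonicity clause says precisely that the input indices occur in increasing order when reading $w_1,\dots,w_n$ left to right, and left to right within each $w_i$; hence in $F\;t_1\ldots t_n$ the variables $f_1,\dots,f_k$ occur at most once (copylessness) and in their binding order (monotonicity), which is exactly what the planar $\lin$-typing demands — interposing a reversal at the $\basety\lin\basety$ level would instead scramble the concatenation order of the output.
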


\begin{proof*}{Proof idea}
For 
$\sigma = (w_1, \ldots, w_n) \in \CMRupdate{k}{n}$, define
$\underline{\sigma} : \kappa_n \to \kappa_k$ to be the term $\lambda F \; f_1 \ldots f_k. F \; t_1 \ldots t_n$
where $t_i$ is obtained by recursion over $w_i$, starting with the identity and postcomposing with
\begin{itemize}
  \item the appropriate  constant from $\underline{\Gamma}$ when we encounter a letter of $\Gamma$
  \item $f_k$ if we encounter the index $k$
\end{itemize}
This is typable in $\lampa$ specifically because the transitions are monotone and copyless.
Then it is relatively straightforward to check that we have the advertised equations.
\end{proof*}

Then the $\lampa$-terms corresponding to transitions will essentially precompose
the suitable terms $\underline{\sigma}$ defined in Lemma~\ref{lem:CMRupdate-lambda}.
This corresponds to applying the exponentiation operation, defined
by $t \lin \basety = \lambda X. \lambda z. \; X \; (t \; z)$.
This operation is compatible with composition, i.e. we have
$(t \lin \basety) \circ (u \lin \basety) =_{\beta\eta} (u \circ t) \lin \basety$
for arbitrary terms $t$ and $u$ which make those expressions typecheck.

\begin{lemma}
Every function definable by a monotone register transducer is $\lampa$-definable.
\end{lemma}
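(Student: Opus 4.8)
The plan is to build, from a monotone register transducer with $n$ registers and letter updates $\sigma_a\in\CMRupdate{n}{n}$ (for $a\in\Sigma$), an equivalent \emph{$\lampa$-transducer} in the sense of Definition~\ref{def:normalAffineStr2Str}. This is enough: given the data $(\kappa,d_a,d_\epsilon,o)$ of a $\lampa$-transducer, one assembles (following the shape of Lemma~\ref{lem:laml-niceshape-str}) a $\lambda$-term of type $\Str_\Sigma[\kappa]\lin\Str_\Gamma$ whose induced function is the transducer's function pre-composed with $\ttreverse$ (a Church encoding applies the first letter's successor outermost), so that function is affine $\lampa$-definable since $\ttreverse$ is and definability is closed under composition. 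I would take the iteration type $\kappa\eqdef\kappa_n\lin\basety$ (purely affine) and set
\[
  d_\epsilon \eqdef \underline{\epsilon^n},
  \qquad
  d_a \eqdef \underline{\sigma_a}\lin\basety \ \ (a\in\Sigma),
  \qquad
  o \eqdef \lam T.\; T\,\bigl(\underline{\pi_0}\,(\lam x.x)\bigr),
\]
where $t\lin\basety = \lam X.\lam z.\; X\,(t\,z)$ is the contravariant exponentiation and $(\lam x.x)$ is the argument from the read-off clause $\underline{\registerContent(\mu)}=_{\beta\eta}\underline{\mu}\,(\lam x.x)$ of Lemma~\ref{lem:CMRupdate-lambda}. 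The typings supplied by Lemma~\ref{lem:CMRupdate-lambda} — namely $\underline{\epsilon^n}:\kappa_n\lin\kappa_0=\kappa$, $\underline{\sigma_a}:\kappa_n\lin\kappa_n$ hence $d_a:\kappa\lin\kappa$, and $\underline{\pi_0}:\kappa_1\lin\kappa_n$ hence $o:\kappa\lin\basety$ — show these are legal purely affine terms over $\underline{\Gamma}$ of exactly the form Definition~\ref{def:normalAffineStr2Str} asks for.

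For correctness, fix an input $w=w_1\cdots w_m\in\Sigma^*$. I would first prove by induction on $i\le m$ that
\[
  d_{w_i}\,(\cdots(d_{w_1}\,d_\epsilon)\cdots)\ =_{\beta\eta}\ \underline{\sigma_{w_i}\circ\cdots\circ\sigma_{w_1}\circ\epsilon^n},
\]
the inductive step unfolding $d_{w_i}=\lam X.\lam z.\,X\,(\underline{\sigma_{w_i}}\,z)$, $\beta$-reducing, and then folding back with the composition law $\underline{\sigma\circ\sigma'}=_{\beta\eta}\lam z.\,\underline{\sigma'}\,(\underline{\sigma}\,z)$ of Lemma~\ref{lem:CMRupdate-lambda}. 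Applying $o$ in the case $i=m$ and invoking that composition law once more (now with $\pi_0$ on the left) together with the read-off clause, one obtains
\[
  o\,\bigl(d_{w_m}\,(\cdots(d_{w_1}\,d_\epsilon)\cdots)\bigr)\ =_{\beta\eta}\ \underline{\registerContent(\pi_0\circ\sigma_{w_m}\circ\cdots\circ\sigma_{w_1}\circ\epsilon^n)}\,,
\]
whose right-hand side is, by definition, the encoding of the word the register transducer outputs on $w$. Since $\beta\eta$-equivalent closed terms of type $\basety$ over $\underline{\Gamma}$ encode a unique word, the normal form of the left-hand side corresponds to exactly that word, which is precisely what Definition~\ref{def:normalAffineStr2Str} requires of the $\lampa$-transducer; hence it computes the same function.

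I expect no real obstacle here: the substantive content — that monotonicity and copylessness of the updates are exactly what makes the $\underline{\sigma_a}$ typable in the planar affine calculus and makes $\underline{(-)}$ respect composition — is already isolated in Lemma~\ref{lem:CMRupdate-lambda}, so what remains is essentially bookkeeping. The two points that do need care are matching the composition orders (register composition $\sigma\circ\sigma'$ runs $\sigma'$ first, while both $\underline{(-)}$ and $(-)\lin\basety$ are contravariant, so the wrappers stack in the order one wants) and keeping in mind that, since $\lampa$ has no tensor, the $n$-tuple of registers is simulated by $\kappa_n\lin\basety$ in continuation-passing style — which is exactly why each $d_a$ must be the exponential $\underline{\sigma_a}\lin\basety$ rather than $\underline{\sigma_a}$ itself, and why $o$ is wrapped the way it is.
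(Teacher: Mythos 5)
Your proposal is correct and is essentially the paper's own argument: the same iteration type $\kappa_n \lin \basety$, the same $d_a = \underline{\sigma_a} \lin \basety$, your $d_\epsilon = \underline{\epsilon^n}$ is literally the paper's $\lambda Z.\, Z\,(\lambda x.x)\ldots(\lambda x.x)$, your $o$ is the paper's $\lambda K.\,(K \circ \underline{\pi_0})\,(\cdots)$ after one $\beta$-step, the verification is the same step-by-step use of Lemma~\ref{lem:CMRupdate-lambda}, and your explicit handling of the orientation mismatch via $\ttreverse$ (Church folds put the first letter outermost, whereas Definition~\ref{def:normalAffineStr2Str} applies it innermost) makes precise a point the paper leaves implicit. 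The only blemish is inherited from the paper's statement of the read-off clause: $\lambda x.x$ cannot have type $\kappa_1$, so the argument in your $o$ (and in that clause) should be a $\kappa_1$-term such as $\lambda x.\, x\,\epsilon$, which is also where the constant $\epsilon$ needed at the end of the output word enters; with that replacement everything you wrote goes through unchanged.
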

\begin{proof*}{Proof idea}
  Suppose we are given such a transducer with $n$ registers and transitions
  $(\sigma_a)_{a \in \Sigma}$ and let us build terms as per Definition~\ref{def:normalAffineStr2Str}.
  We take for iteration type $\kappa_n \lin \basety$, $d_a = \underline{\sigma_a} \lin \basety$,
  $d_\epsilon = \lambda Z. Z \; (\lambda x. x) \; \ldots \; (\lambda x. x)$
  and $o = \lambda K. (K \circ \underline{\pi_0}) \; (\lambda Z. Z \; (\lambda x. x))$.
  Then using Lemma~\ref{lem:CMRupdate-lambda}, we can check step-by-step we have the desired equations.
\end{proof*}

\section{Conclusion}
\label{sec:conc}

We have now proven that affine $\lampa$-definable string-to-string functions
correspond exactly to first-order transductions. One key aspect of the proof
was to use a semantic interpretation of purely affine $\lambda$-terms as planar diagrams to
compile $\lampa$-transducers to 2PRFTs. This result essentially closes the open questions
raised in~\cite{aperiodic} and provides an alternative, less syntactic, proof for the
soundness part of its main theorem.

We will now discuss further results that could be derived by adapting
the material we have developed in the previous section. We will then
list some questions that arise because of, or could be solved using,
the interpretation of terms as (planar) diagrams.

\subsection{Discussion on variations: dropping planarity, regular transductions \& tree languages}

A natural variation on $\catTransition_\Gamma$ is to drop the planarity
requirement on the morphisms so that wires may cross in the geometric realizations
of diagrams. If we do so, the tensor product becomes \emph{symmetric}, that is
we have a natural isomorphisms $\gamma_{A,B} : A \tensor B \to B \tensor A$
such that $\gamma_{A,B} = \gamma^{-1}_{B,A}$ and
$\gamma_{A, \unit} = \id_A$\footnote{This is of course assuming a strict monoidal product.},
while still keeping a poset-enriched autonomous structure.
This change makes the order of nodes in diagrams irrelevant, and objects
with the same number of $+$ and $-$ occurring isomorphic\footnote{Quotienting sensibly
yields a (poset-enriched) category isomorphic to the one computed by applying the
$\Int$ construction~\cite[\S4]{jsvtraced}
to the category whose objects are natural numbers regarded as finite sets and morphisms
from $n$ to $k$ are the subsets of $n \times \Gamma^* \times k$ that induce partial
injections from $n$ to $k$. The composition is then defined by $f \circ g = \{ (i, uv, j) \mid \exists \ell.~ (i,u,\ell) \in f \wedge (\ell, v, j) \in
g\}$ and then the traced monoidal structure is defined analogously to that of
the category of partial injections.}.
This allows to model the commutative variation of $\lampa$, which we call $\lama$,
where we include the exchange rule:
\[
\dfrac{\Gamma; \; \Delta, y : \tau_2, x : \tau_1, \Delta' \vdash t : \sigma}
{\Gamma; \; \Delta, x : \tau_1, y : \tau_2, \Delta' \vdash t : \sigma}\]
If we define what are (affine) $\lama$-definability and $\lama$-transducers in
a manner analogous to $\lampa$-definability and $\lampa$-transducers,
as well as the notion of (not necessarily planar)
\emph{two-way reversible finite transducers} (2RFTs, which match the notion in~\cite{reversibletransducers} and thus capture all \emph{regular}
transductions), we have the following.

\begin{theorem}
\label{thm:commutative}
Affine $\lama$-definable functions and regular transductions coincide:
\begin{itemize}
  \item $\lama$-transducers can be translated into equivalent 2RFTs in polynomial time
  \item regular transductions are $\lama$-definable
\end{itemize}
\end{theorem}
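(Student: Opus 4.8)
The plan is to mirror the proof of Theorem~\ref{thm:main-planar-str} step by step, replacing $\lampa$ by $\lama$, the category $\catTransition_\Gamma$ by its non-planar symmetric variant (the one sketched above, obtained by applying the $\Int$-construction to output-labelled partial injections), 2PRFTs by 2RFTs, and Lemma~\ref{lem:fotransKR} by its regular-transduction counterpart. For the first bullet I would first record that this non-planar variant is again a strict monoidal-closed poset-enriched category carrying a family of least elements $\bot_X$ stable under $\tensor$ with $\bot_\unit = \id_\unit$ --- none of the relevant constructions used planarity --- so that the entire apparatus of Subsection~\ref{subsec:interplambda} applies once we add a single clause interpreting the exchange rule via the symmetry $\gamma$. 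One then checks that Lemma~\ref{lem:interp-subst-le} together with Corollaries~\ref{cor:interp-eta-eq}, \ref{cor:interp-beta-le} and \ref{cor:interp-normal} survive the extension: the only new case in each induction is the exchange rule, and it is immediate since $\gamma$ is a natural isomorphism, so precomposition by $\id \tensor \gamma \tensor \id$ commutes with everything in sight. With $\interp{\basety} = +-$ and the generators of Figure~\ref{fig:generators}, the proof converting $\lampa$-transducers into 2PRFTs then goes through verbatim to convert $\lama$-transducers into 2RFTs in polynomial time; combined with the commutative analogue of Lemma~\ref{lem:laml-niceshape-str} and the definability of $\ttreverse$ (already a $\lampa$-term, see Example~\ref{ex:lam-rev}), every affine $\lama$-definable function is computed by a 2RFT and hence is a regular transduction by~\cite{reversibletransducers}.

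For the converse bullet I would again use a Krohn--Rhodes-style decomposition: every regular transduction can be written as $f \circ \ttreverse \circ g \circ \ttreverse \circ h$ with $f$ computed by a copyless (no longer necessarily monotone) register transducer and $g, h$ sequential --- the variant of~\cite[Lemma~4.8]{ListFunctions} obtained by dropping the aperiodicity and monotonicity side conditions. Since affine $\lama$-definable functions are closed under composition, it then suffices to check each piece: $\ttreverse$ is already $\lampa$-definable; a sequential function is $\lama$-definable, the aperiodic case being~\cite[Theorem~5.4]{aperiodic} and the general case handling the group blocks of the Krohn--Rhodes decomposition by letting a finite group act by permutation on a tuple of $\basety \lin \basety$ registers --- which is permissible precisely because $\lama$ has exchange; and a copyless register transducer is $\lama$-definable by the evident variants of Lemma~\ref{lem:CMRupdate-lambda} and of the final lemma of Subsection~\ref{subsec:fotrans2lambda}, a copyless but non-monotone register update still being typable in $\lama$ since the exchange rule absorbs the reorderings that planarity would forbid.

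The routine parts are the extension of Subsection~\ref{subsec:interplambda} to the exchange rule and the bookkeeping of re-running Subsection~\ref{subsec:lambda22PRFT}. The genuinely load-bearing points --- and where the argument uses commutativity rather than mere affineness --- are, first, verifying that the non-planar variant of $\catTransition_\Gamma$ really is strict monoidal-closed with the pivotal and $\bot$ structure once the tensor is symmetric, for which the cleanest route is the $\Int$-construction description mentioned above, and second, the group case in the definability of sequential functions, i.e.\ encoding the regular representation of a finite group inside $\lama$; alternatively, for this second bullet one may simply invoke~\cite{freeadditives,titophd}, where regular transductions were already shown $\lama$-definable, and present only the $\lama$-transducer-to-2RFT direction as what is genuinely new here.
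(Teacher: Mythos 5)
Your proposal is correct and takes essentially the same route as the paper: the first bullet by re-running Subsections~\ref{subsec:interplambda}--\ref{subsec:lambda22PRFT} over the non-planar symmetric diagram category with the exchange rule interpreted by the symmetry $\gamma$, and the second bullet via the regular-transduction variant of Lemma~\ref{lem:fotransKR} together with $\lama$-definability of $\ttreverse$, of sequential functions and of the register transducers. Your only deviations are inessential: the paper keeps the register transducer \emph{monotone} in the decomposition (your relaxation to arbitrary copyless updates is harmless, since exchange indeed makes them typable), and it simply cites \cite[Theorem 5.4]{aperiodic} for $\lama$-definability of all sequential functions rather than re-deriving the group blocks by hand.
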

\begin{proof*}{Proof idea}
  The first point is obtained by an easy adaptation the arguments of Subsections~\ref{subsec:interplambda}
  (where we add the interpretation of the exchange rule using the symmetry $\gamma$)
  and~\ref{subsec:lambda22PRFT}. The second point is also obtained by an
  argument similar to the one in Subsection~\ref{subsec:fotrans2lambda}: Lemma~\ref{lem:fotransKR}
  holds if we replace ``aperiodic sequential'' by ``sequential'' and ``first-order transduction''
  by ``regular transduction''. We then need to know that all sequential functions are $\lama$-definable, which is true by~\cite[Theorem 5.4]{aperiodic}.
\end{proof*}

This statement should be contrasted with~\cite[Theorem 1.1]{freeadditives}
which states that regular string-to-string transductions coincide with functions
definable in a variant of $\lama$ which is augmented with additives\footnote{And also linear instead of affine, however in the presence of additives, this
distinction is not very important (see~\cite[\S1.2.1]{LLSS} for a discussion).}.
There, $\lambda$-terms defining string-to-string functions are compiled into
streaming string transducer (SSTs). But this translation can yield a
machine that has a state-space whose size is non-elementary in terms of the size
of an input $\lama$-transduction free of additives connectives.
Since the translations between 2RFTs and SSTs is $\ELEMENTARY$~\cite{reversibletransducers},
the translation we offer here is more efficient.
On the other hand, the second point improves on~\cite{freeadditives} by
compiling first-order transductions in a smaller $\lambda$-calculus at the
cost of employing Lemma~\ref{lem:fotransKR} that relies on the powerful
and relatively complex technique of Krohn-Rhodes decomposition instead of a
direct polynomial-time compilation of SSTs.

While we have only investigated functions that take strings as inputs in this
paper, the tools we have introduced can be used to study functions
that take ranked trees as input (and still output strings). Indeed, ranked
trees, that are parameterized by finite ranked alphabet, can be represented by
Church encodings and given
precise affine typing (c.f.~\cite[\S2.3]{freeadditives}).
In that case, $\lama$-terms get compiled to what amounts to
\emph{reversible tree-walking transducers with string output}
(or simply reversible tree-walking automata if we take the output alphabet to be empty)
as defined by restricting Definitions 3.5 and 3.8 of~\cite{nguyenvanoni} to
string outputs. As a result, we can give a new proof of the following theorem,
which is also a consequence of~\cite[Theorem~1.4]{nguyenvanoni}\footnote{Both arguments
essentially appeal to Girard's geometry of interaction,
but theirs is based on compiling executions of an abstract machine evaluating
$\lambda$-terms while we focus on a semantic interpretation of linear logic.}.

\begin{theorem}
  \label{thm:twt}
Every $\lama$ tree-to-string transducer can be turned into an equivalent reversible tree-walking transducer.
\end{theorem}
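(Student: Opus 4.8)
The plan is to replay the argument of Section~\ref{sec:main} with ranked trees in place of strings, relying on the fact that ranked trees admit affine Church encodings and that the category $\catTransition_\Gamma$ (together with its poset-enrichment) is already exactly the structure we need. First I would recall from~\cite[\S2.3]{freeadditives} the affine type $\Treety_{\mathcal{R}}$ associated to a ranked alphabet $\mathcal{R}$: it is the type $(\basety^{\lin r_1} \lin \basety) \to \ldots \to (\basety^{\lin r_m} \lin \basety) \to \basety$ where $r_i$ is the arity of the $i$-th symbol, with the Church encoding of a tree applying the appropriate constructor arguments in the obvious recursive fashion. Correspondingly, a $\lama$ tree-to-string transducer is given (after normalization, using a tree-analogue of Lemma~\ref{lem:laml-niceshape-str}) by an iteration type $\kappa$, a term $o : \kappa \lin \basety$, and for each symbol $c$ of arity $r$ a term $d_c : \underbrace{\kappa \lin \ldots \lin \kappa}_{r} \lin \kappa$, all purely affine over $\underline{\Gamma}$; the function sends a tree $t$ to the normal form of $o$ applied to the obvious fold of the $d_c$'s over $t$.

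Next I would observe that the interpretation $\interp{-}$ of Subsection~\ref{subsec:interplambda} applies verbatim: it was developed for \emph{any} strict monoidal-closed poset-enriched $\cC$ with a stable family of least maps $\bot_X$, and $\catTransition_\Gamma$ is such a category (with $\interp{\basety} = {+}{-}$ and the constants interpreted as in Figure~\ref{fig:generators}). In particular Corollary~\ref{cor:interp-normal} still gives $\interp{t_\NF} \le \interp{t}$, and Lemma~\ref{lem:interp-words} still identifies the interpretation of an encoded output word $\underline{w}$ with the single-edge diagram labelled $w$. Using $\Lambda^{-1}$ as in the string case, the terms $d_c : \kappa^{\lin r} \lin \kappa$ uncurry to morphisms $\interp{\kappa}^{\tensor r} \to \interp{\kappa}$ in $\catTransition_\Gamma$, and $o$ to a morphism $\interp{\kappa} \to {+}{-}$. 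These are precisely the data of a functor out of the analogue of $\catAutomata{\Sigma}$ for trees — i.e., a reversible tree-walking transducer with string output in the sense obtained by restricting~\cite[Definitions 3.5 and 3.8]{nguyenvanoni} to string outputs — since such a transducer is nothing but a consistent assignment of planar transition profiles to the tree constructors, read off exactly as the uncurried $\interp{d_c}$. One then checks, by an induction over the input tree mirroring the commuting-square chain of equations in the proof of the previous theorem, that the semantics of this transducer (folding the profiles along the tree and then composing with $\interp{o}$) equals $\interp{o\;(\text{fold of } d_c\text{'s over } t)}$, which by Corollary~\ref{cor:interp-normal} dominates the interpretation of its normal form, namely the single-edge diagram carrying the output string; as before, since output strings are the maximal elements of $\Gamma^*_\bot$, the inequality collapses to an equality and the two functions agree.

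The main obstacle — and the place where genuine (if routine) work is needed rather than a mechanical transcription — is verifying that the planar transition profiles obtained from the $\interp{d_c}$ genuinely assemble into a \emph{reversible} tree-walking transducer in the precise sense of~\cite{nguyenvanoni}: one must check that the edge-containing-a-source-and-a-target restriction of Definition~\ref{def:diagram-combinatorial} matches their reversibility/bideterminism condition at branching nodes (where a profile now has $r+1$ ``ports'' rather than $2$), and that planarity of the profiles corresponds to the expected left-to-right traversal discipline on subtrees. This is the tree-level counterpart of the bookkeeping already done implicitly for strings in Example~\ref{ex:2RFT} and Definition~\ref{def:2PRFT}, and I expect it to go through by the same geometric reasoning, but it is the step deserving care; everything else is a direct port of Subsections~\ref{subsec:interplambda} and~\ref{subsec:lambda22PRFT}.
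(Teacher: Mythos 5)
There is a genuine gap, and it is the choice of semantic target. The theorem is about $\lama$, the \emph{commutative} affine calculus with the exchange rule, yet you propose to interpret its terms ``verbatim'' in the planar category $\catTransition_\Gamma$. The interpretation of Figure~\ref{fig:interp-mccbot} has no clause for exchange, and none can be added there: the tensor of $\catTransition_\Gamma$ is not symmetric, so a general $\lama$ tree-to-string transducer --- whose terms $d_c$ may use their $r$ subtree arguments out of order --- simply has no interpretation in $\catTransition_\Gamma$. What the paper intends (see the proof idea of Theorem~\ref{thm:commutative}) is to target the variation of $\catTransition_\Gamma$ described just before that theorem, obtained by dropping planarity: the tensor becomes symmetric while the poset-enriched autonomous structure and the $\bot$-maps survive, the exchange rule is interpreted by the symmetry $\gamma$, and only then does Subsection~\ref{subsec:interplambda} go through for $\lama$. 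The uncurried $\interp{d_c} : \interp{\kappa}^{\tensor r} \to \interp{\kappa}$ are then transition profiles of a \emph{reversible}, not necessarily planar, tree-walking transducer, which is exactly what the statement asserts. Correspondingly, your closing worry --- that planarity of the profiles must be matched with a left-to-right traversal discipline on subtrees --- is beside the point for this theorem; it belongs to the paper's open question about which tree languages $\lampa$ recognizes and \emph{planar} reversible tree-walking automata. The bookkeeping that does matter here is the reversibility condition, i.e.\ checking that the one-implicit-source/one-implicit-target discipline on edges of profiles with $r+1$ ports matches the bideterminism of~\cite{nguyenvanoni}, as you correctly flag.

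With the category corrected, the rest of your outline --- affine Church encodings of ranked trees, a tree analogue of Lemma~\ref{lem:laml-niceshape-str}, reading off the $\interp{d_c}$ as node profiles, and the argument that $\interp{t_\NF} \le \interp{t}$ collapses to an equality because output strings are maximal in $\Gamma^*_\bot$ --- is exactly the adaptation the paper has in mind; note that the paper itself only sketches this theorem, presenting it as a consequence of replaying Subsections~\ref{subsec:interplambda} and~\ref{subsec:lambda22PRFT} in the commutative setting (and as a corollary of~\cite[Theorem~1.4]{nguyenvanoni}).
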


This result means the affine $\lambda$-calculus without
additives cannot recognize all regular tree languages~\cite{bojanczyk2008tree},
whereas allowing additives 
captures all regular tree transductions~\cite[Theorem 1.2]{freeadditives}.

\subsection{Perspectives}

A natural question is whether Theorem~\ref{thm:twt} admits a converse: is every
language recognized by a reversible tree-walking automaton also recognized by some $\lama$-term?
Another natural question is ``what are the tree languages recognized by $\lampa$-terms?''.
Clearly, they should be recognized by tree-walking automata that are not only
\emph{reversible}, but also \emph{planar} in the obvious sense. This is an
actual restriction, as a non-planar tree-walking automaton could
count the number of leaves modulo 2, which a planar device could not. So we
can also ask the question: is every language recognized by a planar reversible
tree-walking automaton also recognized by some $\lampa$-term?
These questions might be challenging since we are currently not aware of a
convenient tool similar to the Krohn-Rhodes theorem or~\cite[Theorem 3.4]{fotree}
that would allow to decompose tree-walking transducers.
A first step might be to check that those transducers, as well as their planar
variant, are closed under composition. This would require considering tree-to-tree
transductions as discussed in~\cite{nguyenvanoni}, which would naturally
lead to extending our diagrammatic constructions so that they may
depend on a ranked alphabet, much like the categories of register updates
considered in~\cite{freeadditives}. A variant of the operad of spliced arrows
specified in~\cite[Definition 1.1]{MelliesZeilbergerContours} could be of use.

We have treated only \emph{affine} $\lampa$-definable functions in this paper.
The next question is whether we can also get a characterization of $\lampa$-definable
functions implemented by terms of type $\Str_\Sigma[\kappa] \to \Str_\Gamma$.
It is plausible they correspond to \emph{first-order polyblind} functions alluded
to in~\cite{comparisonfree}\footnote{In which they were called first-order comparison-free. We follow the terminological
change introduced in~\cite{Doueneau-Tabot22}.}, which are obtained by
closing first-order transductions under 
\emph{compositions by substitution}~\cite[Definition 4.1]{comparisonfree}.
Our hope is that this correspondence can be established using a similar strategy
as~\cite[\S5.3]{titophd}.

\bibliographystyle{./entics}
\bibliography{bi}
\newpage
\appendix
\section{$(\catTransition_\Gamma, +, +)$-automata vs 2PRFTs}

Let us detail here why
$(\catTransition_\Gamma, +, +)$-automata 
and 2PRFTs with output alphabet $\Gamma$ define the same string-to-string functions.

First, let us note that both options define an output in $\Gamma_\bot^*$ by
examining, when it exists, the label of the single possible edge of $\Hom{\catTransition_\Gamma}{+}{+}$
and $\Hom{\catTransition_\Gamma}{\unit}{+-}$. This operation
commutes with the isomorphism
\[\begin{array}{ccc}
\Hom{\catTransition_\Gamma}{+}{+} &\cong& \Hom{\catTransition_\Gamma}{\unit}{+-} \\
  f &\longmapsto& \varepsilon_+ ; (f \tensor \id_-) \\
  (g \tensor \id_+); (\id_+ \tensor \eta_+) &\longmapsfrom& g
\end{array}
\]

For the purpose of this discussion, fix an input alphabet $\Sigma$.
First assume we are given a $(\catTransition_\Gamma, +, +)$-automaton $\cA$.
We define an equivalent 2PRT (i.e. a $(\catTransition_\Gamma, \emptyString, +-)$-automaton)
$\cA'$ by setting
\[\cA'(\sState) = \cA(\sState) \tensor - \qquad
\cA'(\triangleright) = \varepsilon_+; (\cA(\triangleright) \tensor \id_-)
\qquad \text{and otherwise} \qquad
\cA'(f) = \cA(f) \tensor \id_-\]
which is easily checked to be functorial and is equivalent to $\cA$ as we have
{\small
\[
\begin{array}{lcll}
(\cA'(\triangleright w \triangleleft) \tensor \id_+); (\id_+ \tensor \eta_+)
  &=&
((\cA'(\triangleright);\cA'(w\triangleleft)) \tensor \id_+); (\id_+ \tensor \eta_+)
  & \text{by functoriality of $\cA'$}\\
  &=&
((\varepsilon_+; (\cA(\triangleright) \tensor \id_-);(\cA(w\triangleleft) \tensor \id_-)) \tensor \id_+); (\id_+ \tensor \eta_+)
  & \text{by definition of $\cA'$}\\
  &=&
  (\varepsilon_+ \tensor \id_+); ((\cA(\triangleright);\cA(w\triangleleft)) \tensor \id_{-+}); (\id_+ \tensor \eta_+)
  & \text{by functoriality of $\tensor$}
  \\
  &=&
  (\varepsilon_+ \tensor \id_+);(\id_+ \tensor \eta_+);
  (\cA(\triangleright);\cA(w\triangleleft))
  & \text{by naturality of $\eta$}\\
  &=&
  \cA(\triangleright);\cA(w\triangleleft)
& \text{by a zigzag equation}\\
  &=&
  \cA(\triangleright w\triangleleft)
& \text{by functoriality}\\
\end{array}
\]
}
Conversely, if we have a 2PRFT $\cT$, we can turn it into an equivalent
$(\catTransition_\Gamma, +, +)$-automaton $\cT'$ by setting
\[\cT'(\sState) = \cT(\sState) \tensor + \qquad
\cT'(\triangleleft) = \cT(\triangleleft) \tensor \eta_+
\qquad \text{and otherwise} \qquad
\cT'(f) = \cA(f) \tensor \id_+\]
The proof that it is equivalent to $\cT$ is similar to the one above,
exploiting the naturality of $\varepsilon$ and the other zigzag equation.
\end{document}